\newcommand{\E}{\mathcal{E}}
\newcommand{\R}{\mathbb{R}}
\newtheorem{theorem}{Theorem}
\newtheorem{lemma}{Lemma}
\newtheorem{proposition}{Proposition}
\newtheorem{definition}{Definition}
\newtheorem{remark}{Remark}
\newtheorem{assumption}{Assumption}
\newtheorem*{assumption*}{Assumption}
\newcommand\independent{\protect\mathpalette{\protect\independenT}{\perp}}
\def\independenT#1#2{\mathrel{\rlap{$#1#2$}\mkern2mu{#1#2}}}
\title{Detecting causal covariates for extreme dependence structures} 
\author{Juraj Bodík$^{1 \footnote{ Email: Juraj.Bodik@unil.ch}}$, Linda Mhalla$^{1,2}$, Valérie Chavez-Demoulin$^1$}
\date{%
    $^1$ {\small Faculté des Hautes Études Commerciales, Université de Lausanne, Switzerland} \\%
    $^2${\small Institute of Mathematics, École polytechnique fédérale de Lausanne, Switzerland}\\[2ex]%
    }
\begin{document}

\pagenumbering{gobble}

\maketitle
\begin{abstract}
     Determining the causes of extreme events is a fundamental question in many scientific fields. An important aspect when modelling multivariate extremes is the tail dependence. In application, the extreme dependence structure may significantly depend on covariates. As for the general case of modelling including covariates, only some of the covariates are causal. In this paper, we propose a methodology to discover the causal covariates explaining the tail dependence structure between two variables. The proposed methodology for discovering causal variables is based on comparing observations from different environments or perturbations. It is a desired methodology for predicting extremal behaviour in a new, unobserved environment. The methodology is applied to a dataset of $\text{NO}_2$ concentration in the UK. Extreme $\text{NO}_2$ levels can cause severe health problems, and understanding the behaviour of concurrent severe levels is an important question. We focus on revealing causal predictors for the dependence between extreme $\text{NO}_2$ observations at different sites. 

    \end{abstract}
    
\newpage

\pagenumbering{arabic}

\section*{Introduction}

Modelling extreme events is a crucial part of many real-world problems and the focus is often on the extremal dependence structure between several variables \citep{Coles}. In a univariate case, under very general assumptions, a normalized maxima of random variables converges to a Generalized Extreme Value (GEV) distribution \citep{GEV}. In the multivariate case, however, the limiting dependence structures for extreme values do not follow a simple parametric form. Therefore, alternative joint tail representations should be used, and extensive literature describes different approaches \citep{Resnick, beirlant2006statistics}.  

More recent literature describes the modelling and estimation of covariate influence on the joint tail structure. \cite{Linda3} propose a parametric framework for integrating covariate information through the angular density model. Non-parametric modelling of the joint tail structure using GAM modelling \citep{Wood} has been developed in \cite{Linda}. In this paper, our main emphasis is on the causal properties of the extremal  tail coefficient between two random variables $Z_1, Z_2$ \citep{Joe}.

 A real-world problem of interest consists of $\text{NO}_2$ measurements in the UK. Features of $\text{NO}_2$ concentration levels or other air pollutants are of interest in environmental science \citep{Shi_2014} as extreme values of $\text{NO}_2$ can be very dangerous and have been linked to an increased risk of lung cancer \citep{NO2cancer}. 
The dataset taken from \cite{openair} comprises 241 measurement stations in the UK with hourly records of $\text{NO}_2$ [µg/m3]. 
We use only the stations with at least 15 years of measurements (21 stations in total). 
$\text{NO}_2$ is partially produced by burning fossil fuel and motor vehicle exhaust. Hence, we distinguish traffic stations where the $\text{NO}_2$ level is predominantly determined by nearby
traffic (12 stations) from background stations  (9 stations).
Figure \ref{FigureTraf} shows $\text{NO}_2$ concentration (in $\mu$g/m$^3$) for one traffic station and one background station over time. We observe that the magnitude is usually larger in traffic stations than in background stations. Although a yearly seasonality seems clearer for the background station, both series highlight a long term decreasing trend. The point of our interest is the extremal dependence between a pair of stations explained by some covariates. Some of these covariates may be causal, others not. We are interested in answering the following question: What is the \textit{cause} of the extremal dependence between  $\text{NO}_2$ concentrations at two stations?

\begin{figure}[h]
\centering
\includegraphics[scale=0.6]{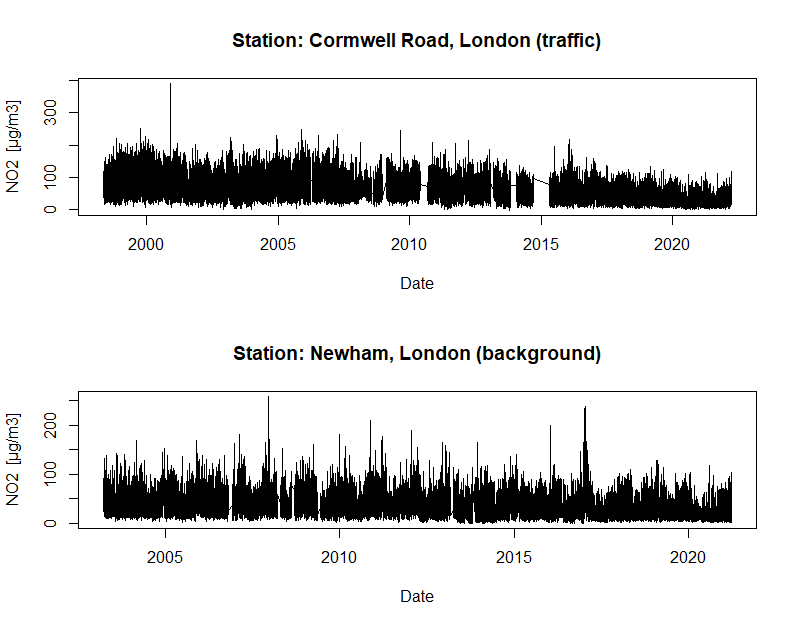}
\caption{An example of traffic and background stations in London. }
\label{FigureTraf}
\end{figure}

Determining a system's causal structure is an important problem in many scientific fields. There is a growing literature on causal inference under various types of assumptions and different frameworks, with applications ranging from public health to biology and economics \citep{Biology_causal_inference, Economics_causal_inference}. Often, the goal is to find the causes of some response variable $Y$ among a given set of covariates $\textbf{X}$ or to quantify the causal relationships between a set of variables. In recent years, a causal discovery based on conditional invariance has become an important growing research area. Starting with a groundwork paper \citep{Peters}, connections between distributional invariance and causality started to emerge. The reviewing paper \cite{Buhlmann} summarises the groundwork and highlights the importance of causal prediction. The methodology based on invariant causal prediction (ICP) relies on observing a system under different environments \citep{Pearl_book}. Then, one can observe some invariant properties for causal variables through all environments, but not for non-causal ones. 

Here, we are interested in unveiling causal links explaining extremal dependence between random variables. Precisely, we want to assess whether an intervention on a covariate affects the tail dependence of the random variables. Classical causal techniques that usually give a good description of data tend to fail in the upper tail due to the scarcity of observations \citep{EngelkeSparsity}. A combination of causality and extreme value theory (EVT) is a developing research area. \cite{Linda_causality} use a Kolmogorov complexity to detect causal direction using observational data.  \cite{10.1214/20-AOAS1355} studied probabilities of necessary and sufficient causation as defined in the counterfactual theory using the multivariate generalized Pareto distributions. \cite{Gnecco} define a causal tail coefficient revealing the causal structure between two random variables by capturing asymmetries in the extremal dependence of the two random variables under the context of the linear structural causal model. Their methodology holds also in the presence of latent common causes with the same tail heaviness as the observed variables. \cite{Pasche} proposes a methodology relying on the causal tail coefficient that can deal with latent common causes that have different tail indices.  \cite{Deuber} describe a framework for estimating extremal quantile treatment effects. However, to our knowledge, no theory combines EVT and causal inference using the ICP framework and inferring causes of the extremal dependence structure.

A well-established approach in causal inference relies on the structural causal model (SCM) of  \cite{Pearl_book}. The ICP framework exploits invariance properties in SCM for inference; see \cite{Stara_causality, Dawid,ML_causal} for similar notions.  The main emphasis in the ICP setting has been on linear regression models assuming the additive form $Y=\textbf{X}_{S^\star}\beta + \varepsilon$, where $Y$ is the response variable of interest, $\textbf{X}_{S^\star}$ is the set of direct causal covariates among the set of covariates $\textbf{X}$ and $S^\star\subseteq\{1, \dots, d\}$ the set of indices of direct causes of $Y$. Generalizations of this model have been given by \cite{Christina} with the more flexible form $Y=g(\textbf{X}_{S^\star}) + \varepsilon$ where $g$ is a smooth, non-parametric function. We aim to use the ICP methodology in the context of the joint tail structure.  

The paper is organized as follows: Section 1 introduces EVT in the bivariate context. Section 2 describes the ICP methodology. Section 3 proposes a methodology to unveil  causal covariates among a set of covariates explaining the extremal dependence between two random variables. A simulation study is presented in Section 4 and Section 5 illustrates the proposed methodology through the dataset of $\text{NO}_2$ measurements in the UK.


\section{Bivariate extreme value theory}


Let $(Y_i)_{i\geq 1}$  be a sequence of independent and identically distributed (iid) random variables
with common distribution $F$. We are interested in the behaviour of the maximum
of a sequence of $n$ such random variables, that is $M_n=\max_{i=1, \dots, n} Y_i$. If there exist a non-degenerate distribution $G$ and sequences of constants $\{a_n>0\}$ and $\{b_n\}$ such that $\frac{M_n-b_n}{a_n}\overset{D}{\to}G$ as $n\to\infty$, then $G$ belongs to the Generalized Extreme Value (GEV) distributions family. The GEV distribution has three parameters, namely a location $\mu\in\mathbb{R}$, a scale $\sigma>0$, and a shape $\xi\in\mathbb{R}$. Its distribution function takes the form 
\begin{equation}
\label{GEV}
  \begin{aligned}
  F(x) &=  \exp\left\{ - \left(1+\xi\frac{x-\mu}{\sigma}\right)^{-1/\xi}  \right\} \text{,    }\xi\neq 0,\\ 
F(x) &=  \exp\left\{ - \exp\left(-\frac{x-\mu}{\sigma}\right) \right\} \text{,    }\xi= 0,
   \end{aligned}
\end{equation}
defined on the support $[\mu - \sigma/\xi, \infty), (-\infty, \infty), (-\infty, \mu - \sigma/\xi]$ for cases $\xi<0, \xi=0, \xi>0$ respectively. Cases when $\xi>0, \xi=0$, and $\xi<0$ correspond to the Fréchet, Gumbel, and Weibull distributions, respectively \citep{fisher_tippett_1928, GEV}. 

Suppose now $(\textbf{Y}_i)_{i\geq 1}$ is a sequence of independent copies of a two dimensional random vector $\textbf{Y}=(Y_1,Y_2)^\top$ with marginal distributions $F_j$, $j=1,2$ and joint distribution $F$ and denote by 
$\textbf{M}_n=(M_{n,1}, M_{n,2})^\top$ their component-wise maxima. 
We suppose that there exist sequences 
$\{\textbf{a}_n\}\subset \R^2_+$ 
and $\{\textbf{b}_n\} \subset \R^2$ such
that the normalized vector $(\textbf{M}_n - \textbf{b}_n)/\textbf{a}_n$ converges in distribution to a random vector $\textbf{Z}$ 
with joint distribution $G$ and non-degenerate margins $G_j$, $j = 1,2$.
Then, the limiting distribution $G$ is called the multivariate extreme value distribution
(MEVD) \citep[Chapter 5.3]{Resnick}. The class of MEVDs coincides with the class of max-stable distribution functions with non-degenerate
margins. The margins $G_j$ are max-stable, or equivalently GEV. Without loss of generality, it is convenient to transform the marginal
distributions $G_j$ to unit Fréchet margins with distribution function 
$P(Z_j\leq z)=\exp{(-1/z)}1_{[0,\infty)}(z)$. The transformation is marginally performed using a probability integral
transform with 
\begin{equation}\label{FrechMarg}
    t_j(z)= -1/\log{G_j(z)}, z>0,
\end{equation}
applied to the $j$th GEV margin $G_j$.
When the margins are transformed to unit Fréchet, the dependence structure of the transformed random vector is the same as  the dependence structure of the  vector $\textbf{Z}$. We therefore assume hereafter that $\textbf{Z}$ has unit Fréchet margins. In that case, $G$ is called a simple max-stable distribution and has the representation $G(\textbf{x})=e^{-V(\textbf{x})}$, where $V(\textbf{x})$ is called the exponent function \citep[Chapter 5.4]{Resnick}. It can be shown that $V$ is $-1$ homogeneous, that is, $V(t\textbf{x})=\frac{1}{t}V(\textbf{x}), t>0, \textbf{x}=(x_1,x_2)\in\mathbb{R}^2_+$ \citep{Coles2}. The Pickands dependence function \citep{Pickands1981} $A: \mathcal{S}_2\to [\frac{1}{2}, 1]$ is defined by 
\begin{equation}\label{PickandsPreliminaries} 
 V(\textbf{x}) =\bigg(\frac{1}{x_1} + \frac{1}{x_2}\bigg) A(\boldsymbol{\omega}),
 \end{equation}
where $\mathcal{S}_2$ is the unit simplex and $\boldsymbol{\omega} = (\omega_1, \omega_2)= (\frac{x_1}{\Vert \textbf{x}\Vert},\frac{x_2}{\Vert\textbf{x}\Vert} )\in\mathcal{S}_2$, for any norm $\Vert \cdot \Vert$. The Pickands dependence function is the restriction of the exponent function $V$ to $\mathcal{S}_2$, and it measures the strength of extremal dependence along a direction $\boldsymbol{\omega} \in \mathcal{S}_2$. 
This function must satisfy the boundary restrictions $\max{(\omega_1,\omega_2)}\leq A(\boldsymbol{\omega})\leq 1$  where the lower bound
corresponds to perfect extremal dependence and the upper bound to the independence
case. It also satisfies the
border conditions $A(0,1)=A(1,0) = 1$ and is a continuous and convex function on $S_2$.
The extremal coefficient \citep{Smith}, defined by 
\begin{equation}
    \theta= V(1,1)=2A(1/2,1/2),
    \label{theta}
\end{equation}
is a summary of the strength of extremal dependence between $Z_1$ and $Z_2$. Noting that $P(\textbf{Z}\leq z) = \exp(-1/z)^{\theta}$, 
the extremal coefficient lies between 1 (perfect dependence) and 2 (independence). Moreover,  
\begin{equation}\label{chi}
\theta=2-\lim_{u\to\infty} P(Z_1>u\mid Z_2>u).
\end{equation}
In the rest of the paper, we focus mainly on modelling $\theta$. 

For modelling $G$, it is common to use parametric representations of the exponent function $V$. An example is the logistic model, defined as 
\begin{equation}\label{LogisticModel}
P(Z_1\leq z_1, Z_2\leq z_2) = \exp\{ -\big(z_1^{-1/\alpha} + z_2^{-1/\alpha}\big)^{\alpha} \} , \alpha\in (0,1]. 
\end{equation}
Independence and complete dependence correspond to  $\alpha=1$,  $\alpha\to 0$, respectively \citep{Tawn1988}. It is a symmetric model, where $Z_1, Z_2$ are exchangeable and a simple relation $\theta = 2^\alpha $ holds. Another example is the Hüsler--Reiss model, defined as  
\begin{equation}\label{HR}
P(Z_1\leq z_1, Z_2\leq z_2) = \exp \bigg\{  -\Phi\big(\lambda + \frac{ z_2-z_1}{2\lambda}\big)e^{-z_1} - \Phi\big(\lambda + \frac{z_1- z_2}{2\lambda}\big)e^{- z_2} \bigg\}, 
\end{equation}
where $\Phi$ is a Gaussian distribution function and $\lambda\in(0,\infty)$ is a parameter. Independence and complete dependence correspond to  $\lambda\to \infty$,  $\lambda\to 0$, respectively, and $\theta = 2\Phi(\lambda)$ holds \citep{HuslerReiss}.  Other commonly used parametric models include, for example, Dirichlet models \citep{TAWN1990}.

In many applications, the extremal dependence structure depends on covariates. Consider the bivariate max-stable random vector $\textbf{Z}(\textbf{x}) = (Z_1(\textbf{x}), Z_2(\textbf{x}))^\top$ given a set of $d$ observed covariates $\textbf{x}=(x_1,\ldots, x_d)^\top$ and suppose that  it has unit Fréchet marginal distributions. Both the related Pickands function and extremal coefficient are therefore covariate-dependent and can be written $A(\boldsymbol{\omega}|\textbf{x})$ for $\boldsymbol{\omega}\in S_2$ and $\theta(\textbf{x}) = 2A\left\{(\frac{1}{2},\frac{1}{2})\mid \textbf{x}\right\}$. While our emphasis is on non-stationary extremal dependence structures, one may also consider the margins of $\textbf{Z}(\textbf{x})$ to be covariate-dependent, that is, $F_{j,\textbf{x}}(z) = P(Z_j(\textbf{x})\leq z)$ for $j=1,2$.  In that case, a regression model applied to the marginal GEV distributions’ parameters \citep{Pauli} is required to transform the margins to a unit Fréchet scale.

In order to infer about the covariate-dependent Pickands function, \cite{Linda} introduced, for a fixed direction in the unit simplex, the max-projection
$$
Y(\omega, 1-\omega)\mid \textbf{x} = \max[H^{\lambda_1}\{Z_1(\textbf{x})\}, H^{\lambda_2}\{Z_2(\textbf{x})\} ], \quad \omega\in(0,1),
$$
where $\lambda_1 = \omega (\frac{1}{\omega} + \frac{1}{1-\omega}) , \lambda_2 = (1-\omega) (\frac{1}{\omega} + \frac{1}{1-\omega})$, and $H(x) = e^{-1/x}$. It was shown that the resulting random variable is Beta distributed with parameters $A\left\{(\omega,1-\omega)\mid \textbf{x}\right\}$ and $1$. Hence, for a fixed value of $\omega$, statistical inference for $A\left\{(\omega,1-\omega)\mid \textbf{x}\right\}$ can be performed through likelihood procedures or regression techniques.

\section{Causal modelling} 
\label{sectionCausalInfernece}
First, we introduce a graphical causal notation (e.g., \cite{PCalgorithm}). 
A DAG (directed acyclic graph) $\mathcal{G}=(V,E)$ contains a finite set of vertices (nodes) $V$ and a set of directed edges $E$  between distinct vertices. We assume that there exist no directed cycle or multiple edges. Let $i,j\in V$ be two distinct nodes of $\mathcal{G}$. We say that $i$ is a parent of $j$ if there exist an edge from $i$ to $j$ in $E$, and denote $i\in  pa_{j}(\mathcal{G})$. We usually omit $\mathcal{G}$ in the notation if it is clear from context. Consider a random vector $(X_i)_{i\in V}$ over a probability space $(\Omega, \mathcal{A}, P)$. With a slight abuse of notation, we identify the vertices $j \in V$ with the variables $X_j$.  We denote $\textbf{X}_S = \{X_s: s\in S\}$ for $S\subseteq V$. Typically, we denote $V = \{0, \dots, d\}$, where $X_0$ (usually denoted by $Y$) is a variable of interest and $\textbf{X}=(X_1, \dots, X_d)^\top$ are covariates. We assume that $(Y,\textbf{X})$ follow a structural causal model (SCM)  \citep{Pearl_book}, defined as follows. 

\begin{definition}\label{def1}
The random variables $X_0, \dots, X_d$ (where $X_0$ plays the role of the target variable $Y$) follow a SCM with DAG $\cal{G}$ representing the causal structure \citep{Pearl} if for each $i\leq d$ the variable $X_i$ arises from a structural equation $$
X_i=g_i(\textbf{X}_{pa_i}, \varepsilon_i),
$$
where $\varepsilon_i$ are jointly independent random variables. The functions $g_i$ are the assignments (or link functions) and we say that $X_j$ is a direct cause of $X_i$, if $j\in pa_i(\mathcal{G})$. 
\end{definition}

We are interested in estimating the set of direct causes of $Y$. Doing that only from a  random sample is often impossible, unless we assume some identifiable structure and additional strong assumptions \citep{Lingam,Peters2014}.

\begin{definition}[Intervention]
Consider an SCM as in Definition \ref{def1}. We replace one of the structural equations to obtain a new SCM. Assume that we replace the assignment for $X_i$ by $X_i=\tilde{g}_i(\textbf{X}_{\tilde{pa}_i}, \tilde{\varepsilon}_i)$, where the new $\tilde{\varepsilon_i}$ is independent of all $\varepsilon_j$ and $\tilde{g}_i, \tilde{pa_i}$ are potentially arbitrary. We then call the entailed distribution of the new SCM an intervention distribution and say that $X_i$ has been intervened on. 
\end{definition}

A classical example of an intervention is a hard (atomic) intervention on $X_i$ \citep[Definition 6.8.]{Elements_of_Causal_Inference}, replacing the structural equation in the SCM by $X_i=const$. Note the difference between a simple conditioning $Y\mid X_i=const$ and a hard intervention, where we assume that the joint distribution $P_{(Y^e, \textbf{X}^e)}$ results from a different data-generating process. 

More generally, suppose that we observe $(Y,\textbf{X})$ in different environments \citep[Section 7.1.6]{Elements_of_Causal_Inference}. In each environment $e\in\mathcal{E}$, there is a target variable $Y^e$ and covariates $\textbf{X}^e=(X_1^e, \dots, X_d^e)^\top$. The number of environments $|\mathcal{E}|$ is assumed finite. The set of environments $\mathcal{E}$ is defined in a general way by assuming that for each $e\in\mathcal{E}$, $(Y^e, \textbf{X}^e)$ follows (a different) SCM with distribution $P^e = P_{(Y^e, \textbf{X}^e)}$. A different environment can be generated by (possibly several) interventions \citep{Interventions} on the covariates. However, we do not assume that any information about the interventions is known. 

Note that interventions on $\textbf{X}$ do not change the conditional distribution of $Y\mid \textbf{X}$. We assume that the parents of $Y^e$ remain the same in all environments. Consider the set $S^\star\subseteq \{1, \dots, d\}$ of indices determining the covariates of $ \textbf{X}$ that are true direct causes  of $Y$. Invariance based causal discovery \citep{Peters} relies on the fact that $Y^e\mid \textbf{X}^e_{S^\star}$ is invariant for each $e\in\mathcal{E}$. We can put it more formally by assuming the following:

\begin{assumption}
\label{Assumption1}
There exist  $S^\star\subseteq \{1, \dots, d\}$, a function $g:\R^{|S^\star|+1}\to\R$ and a distribution $F_\varepsilon$, such that for all $e\in\mathcal{E}$ we have 
\begin{equation}
\label{equation1}
Y^e=g(\textbf{X}^e_{S^\star}, \varepsilon^e), \,\, \varepsilon^e \independent  \textbf{X}^e_{S^\star}, \,\, \varepsilon^e\sim F_\varepsilon.
\end{equation}
\end{assumption}

In other words, the structural equation for the target variable can be written as  $
Y^e=g(\textbf{X}^e_{S^\star}, \varepsilon^e)
$ for \textit{all} $e\in\mathcal{E}$. The link function $g$ remains the same in all environments, and noise variables $\varepsilon^e$ are iid. As discussed in \cite{Buhlmann}, Assumption~\ref{Assumption1} guarantees that on one hand $e$ does not change the mechanism between $\textbf{X}^e$ and $Y^e$, and on the other hand that $e$ does not act directly on $Y$ and ideally, $e$ should change the distribution of $\textbf{X}^e$.

In what follows, the formulation ``Assumption \ref{Assumption1} is valid for set $S$”, $S\subseteq\{1, \dots, d\}$ means that there exist $g$ and $F_{\varepsilon}$ such that (\ref{equation1}) holds with set $S$. It is important to observe the following \citep[Section 6.1]{Peters}: 

\begin{lemma}
Assumption 1 is valid for the set S if and only if for all $e,f\in\E$ we have $Y^e\mid (\textbf{X}^e_{S}=\textbf{x})\overset{D}{=}Y^f\mid (\textbf{X}^f_{S}=\textbf{x})$ for all $\textbf{x}$. 
\end{lemma}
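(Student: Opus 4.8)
The plan is to prove the two implications separately, with essentially all of the content sitting in the reverse (reconstruction) direction.

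For the forward direction (Assumption~\ref{Assumption1} $\Rightarrow$ invariance), I would simply unwind the definitions. Suppose $Y^e = g(\textbf{X}^e_{S}, \varepsilon^e)$ with $\varepsilon^e \independent \textbf{X}^e_{S}$ and $\varepsilon^e \sim F_\varepsilon$ for every $e\in\mathcal{E}$. Fix $\textbf{x}$. Because $\varepsilon^e$ is independent of $\textbf{X}^e_{S}$, the conditional law of $Y^e$ given $\textbf{X}^e_{S}=\textbf{x}$ coincides with the law of $g(\textbf{x},\varepsilon^e)$. Since neither the map $g$ nor the distribution $F_\varepsilon$ of $\varepsilon^e$ depends on $e$, this law is the same for every environment, which yields $Y^e\mid(\textbf{X}^e_{S}=\textbf{x})\overset{D}{=}Y^f\mid(\textbf{X}^f_{S}=\textbf{x})$.

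For the reverse direction (invariance $\Rightarrow$ Assumption~\ref{Assumption1}), I would explicitly build $g$ and $F_\varepsilon$. First I would let $F_{\textbf{x}}$ denote the conditional CDF of $Y^e$ given $\textbf{X}^e_{S}=\textbf{x}$, which by hypothesis is free of $e$. Taking $F_\varepsilon$ to be the Uniform$(0,1)$ law, I would define the noise by the probability integral transform, $\varepsilon^e := F_{\textbf{X}^e_{S}}(Y^e)$: conditionally on $\textbf{X}^e_{S}=\textbf{x}$ this is Uniform$(0,1)$, and since that conditional law does not depend on $\textbf{x}$ we obtain both $\varepsilon^e \independent \textbf{X}^e_{S}$ and $\varepsilon^e\sim F_\varepsilon$ with the same $F_\varepsilon$ in every environment. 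Setting $g(\textbf{x},u):=F_{\textbf{x}}^{-1}(u)$, the common conditional quantile function, gives $g(\textbf{X}^e_{S},\varepsilon^e)=Y^e$ almost surely, so (\ref{equation1}) holds for $S$ with one and the same $g$ and $F_\varepsilon$ across all $e\in\mathcal{E}$.

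The hard part will be that the clean probability integral transform only works when each $F_{\textbf{x}}$ is continuous; if the conditional distribution has atoms, $F_{\textbf{X}^e_{S}}(Y^e)$ fails to be uniform and $F_{\textbf{x}}^{-1}\!\big(F_{\textbf{x}}(\cdot)\big)$ need not recover $Y^e$. To handle this I would pass to the randomized distributional transform: enlarge the probability space by an auxiliary $U\sim$ Uniform$(0,1)$ independent of everything else and set $\varepsilon^e := F_{\textbf{X}^e_{S}}(Y^e-) + U\big(F_{\textbf{X}^e_{S}}(Y^e) - F_{\textbf{X}^e_{S}}(Y^e-)\big)$, which is Uniform$(0,1)$ conditionally on $\textbf{X}^e_{S}=\textbf{x}$ irrespective of atoms, together with the generalized inverse for $g$. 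The remaining care is to verify that this $\varepsilon^e$ is still independent of $\textbf{X}^e_{S}$ and that the single deterministic map $g$ reproduces $Y^e$ in every environment simultaneously, which is precisely what Assumption~\ref{Assumption1} requires.
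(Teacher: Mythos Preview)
The paper does not actually supply its own proof of this lemma; it simply attributes the result to \cite{Peters}, Section~6.1. Your argument is correct and is precisely the standard noise-outsourcing construction used to justify this equivalence: the forward direction is immediate from independence, and the reverse direction manufactures a common $g$ and a uniform noise from the invariant family $\{F_{\textbf{x}}\}$ via the (randomized) probability integral transform. This is in line with how the claim is handled in the ICP literature, so there is no meaningful methodological difference to report.

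Two small points you might make explicit in a polished write-up. First, the map $(\textbf{x},u)\mapsto F_{\textbf{x}}^{-1}(u)$ must be jointly measurable for $g$ to be a bona fide function $\R^{|S|+1}\to\R$; this is guaranteed once you fix a regular version of the conditional distribution, which exists because $Y^e$ is real-valued. Second, your passage to an auxiliary independent $U$ to handle atoms enlarges the probability space; this is harmless here because Assumption~\ref{Assumption1} only asserts the \emph{existence} of some $\varepsilon^e$ with the stated properties and does not require $\varepsilon^e$ to be $\sigma(Y^e,\textbf{X}^e)$-measurable, but it is worth saying so.
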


The ICP methodology stems from testing for \textit{each} $S\subseteq \{1, \dots, d\}$ the following hypothesis:
$$H_{0, S}(\mathcal{E}): \text{Assumption 1 is valid for the set S}.$$Therefore, 
\begin{equation}
\label{hat(S)}
\hat{S}^\star= \bigcap_{S: H_{0,S}(\mathcal{E})\text{ is not rejected}} S
\end{equation}
is the proposed estimator of the set of direct causes $S^\star$. 

If Assumption 1 holds and we have a proper test for $H_{0,S}(\mathcal{E})$, then $H_{0, S^\star}$ will not be rejected on a given confidence level, and therefore $\hat{S}^\star \subseteq S^\star$. On the other hand, if we want to guarantee $\hat{S}^\star = S^\star$, we need to observe ``enough” environments. It is a big challenge to show some necessary and sufficient conditions for the full identifiability of the causal variables. Some results are known in only special cases \citep{Peters_sparsity}.  It is easy to see that the more environments we observe, the larger $\hat{S}^\star$ is. 

As for testing $H_{0,S}(\mathcal{E})$, we need to assume some form of the link function $g$. Usually, classical regression methods are used to infer about the link function $g$ and  testing $H_{0,S}(\mathcal{E})$ is performed through a testing procedure such as a $k$-sample test. However, if we have a test $H_{0,S}(\mathcal{E})$ having a proper level $\alpha$, the Theorem \ref{Theorem_Level} guarantees that each element of $\hat{S}^\star$ indeed lies in $S^\star$ with probability $1-\alpha$. 

\begin{theorem} \citep{Peters}
\label{Theorem_Level}
Let Assumption \ref{Assumption1} be valid. Consider a test $H_{0,S}(\mathcal{E})$ that has a level $\alpha$ for $S\subseteq \{1, \dots, d\}$. Then, the previously defined estimate $\hat{S}^\star$ fulfills 
$$
P(\hat{S}^\star\subseteq S^\star)\geq 1-\alpha. 
$$
\end{theorem}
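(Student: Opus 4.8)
The plan is to exploit the fact that, by hypothesis, Assumption~\ref{Assumption1} is valid, so it holds in particular for the true causal index set $S^\star$. This means that $H_{0,S^\star}(\mathcal{E})$ is a \emph{true} null hypothesis, and the whole argument reduces to controlling the probability of wrongly rejecting it.

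First I would invoke the defining property of a level-$\alpha$ test: the probability of rejecting a true null is at most $\alpha$. Applying this to $S^\star$, for which the null holds, yields
\begin{equation}
\label{eq:levelbound}
P\big(H_{0,S^\star}(\mathcal{E})\text{ is rejected}\big)\leq \alpha,
\qquad\text{equivalently}\qquad
P\big(H_{0,S^\star}(\mathcal{E})\text{ is not rejected}\big)\geq 1-\alpha.
\end{equation}

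Second, I would establish a deterministic (sample-path) inclusion of events linking the test outcome to the estimator. On the event that $H_{0,S^\star}(\mathcal{E})$ is not rejected, the set $S^\star$ belongs to the collection of sets over which the intersection in \eqref{hat(S)} is taken. Since an intersection is contained in each of its members, it follows that $\hat{S}^\star\subseteq S^\star$ on that event. In terms of events this reads
$$
\{H_{0,S^\star}(\mathcal{E})\text{ not rejected}\}\subseteq\{\hat{S}^\star\subseteq S^\star\}.
$$
Combining this inclusion with \eqref{eq:levelbound} via monotonicity of probability gives $P(\hat{S}^\star\subseteq S^\star)\geq P(H_{0,S^\star}(\mathcal{E})\text{ not rejected})\geq 1-\alpha$, which is the claim.

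The argument is short, and the only point requiring care is that the event inclusion is \emph{one-sided}: one can have $\hat{S}^\star\subseteq S^\star$ even when $H_{0,S^\star}(\mathcal{E})$ is rejected (for instance, if enough other hypotheses are rejected so that the intersection shrinks below $S^\star$). This is harmless because we only need a lower bound, for which monotonicity over the smaller event suffices; we never need the reverse implication. A minor bookkeeping detail is the empty-intersection convention, but it is irrelevant here, since on the event of interest the index collection contains at least $S^\star$, and whatever happens on its complement does not affect the bound.
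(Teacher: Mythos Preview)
Your argument is correct and is precisely the standard proof from \citet{Peters}: since Assumption~\ref{Assumption1} holds for $S^\star$, the null $H_{0,S^\star}(\mathcal{E})$ is true, a level-$\alpha$ test rejects it with probability at most $\alpha$, and on the complementary event $S^\star$ enters the intersection in~\eqref{hat(S)}, forcing $\hat{S}^\star\subseteq S^\star$. The paper does not supply its own proof of this theorem but merely cites the result, so there is nothing further to compare.
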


Hence, a proper test $H_{0,S}(\mathcal{E})$ that has a level $\alpha$ (at least asymptotically) is needed. One of the goals of the work by \cite{Christina} was to conduct an appropriate $H_{0,S}(\mathcal{E})$ test under different assumptions on the link function. Here, we use the invariant residual distribution test explained below. 

\subsection{Invariant residual distribution test}
\label{Invariant residual distribution test}

We assume that the link function $g$ from Assumption \ref{Assumption1} is additive, that is, the structural equation is of the form  $$Y^e=g(\textbf{X}^e_{S^\star}) + \varepsilon^e, \,\, \varepsilon^e \independent  \textbf{X}^e_{S^\star}, \,\, \varepsilon^e\sim F_\varepsilon$$ 
for all $e\in\mathcal{E}$. The core of the proposed test for $H_{0,S}(\mathcal{E})$ has two main steps:

\begin{enumerate}
\item Pool the data together from all environments and fit a model to predict $Y$ from $\textbf{X}_S$. Stated differently, estimate $g$ relying on data from all environments. 
\item Use a $k$-sample test to assess whether the residuals obtained from different environments have the same
distribution. 
\end{enumerate}

A more detailed description can be found in Appendix II.5 in \cite{Christina}.

\section{Causal modelling of extremal dependence}

In order to frame the modelling of extremes into a structural causal setup, we start by introducing the LogMax-projection.

\begin{lemma}
\label{LLMax-projection}
Assume that $Z=(Z_1, Z_2)^\top$ is a bivariate simple max-stable random vector with a Pickands dependence function $A_{Z_1, Z_2}(\cdot)$ and extremal coefficient $\theta$. Denote $A:=A_{Z_1, Z_2}(\frac{1}{2},\frac{1}{2})$. 
Let 
\begin{align*}
Y':=&\min\bigg(\frac{2}{Z_1}, \frac{2}{Z_2}\bigg),\\
Y:=&  \log\{\max(Z_1, Z_2)\}-\gamma ,
\end{align*}
where $\gamma\approx 0.57$ is the Euler constant. Then, 
\begin{itemize}
\item $Y'$ is exponentially distributed with parameter $A$. In particular, $\mathbb{E}(Y')=\frac{1}{A}$. 
\item $Y\overset{D}{=} \log(\theta) + \varepsilon$, where $\varepsilon\sim Gumbel(-\gamma,1)$\footnote{$Gumbel(\mu, \beta)$ distribution function has the form $P(\varepsilon\leq x) = e^{-e^{-(x-\mu)/\beta}}$. Specifically, $Gumbel(-\gamma, 1$) is a centered Gumbel distribution.}. In particular, $\mathbb{E}(Y)=\log(\theta)$.   
\end{itemize}
\end{lemma}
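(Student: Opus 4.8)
The plan is to reduce both bullets to the behaviour of the joint distribution of $Z$ along the diagonal, which is completely determined by the extremal coefficient. Throughout I would use that $Z$ has unit Fréchet margins, the simple max-stable representation $P(Z_1\le x_1, Z_2\le x_2)=e^{-V(x_1,x_2)}$, the $-1$ homogeneity of $V$, and the relation $\theta=V(1,1)=2A$ from \eqref{theta}. The single computation underlying everything is the diagonal law: for any $z>0$,
\[
P(Z_1\le z,\,Z_2\le z)=e^{-V(z,z)}=e^{-\frac{1}{z}V(1,1)}=e^{-\theta/z},
\]
where homogeneity pulls the scalar $z$ out of $V$. This is exactly the identity $P(Z\le z)=\exp(-1/z)^{\theta}$ noted after \eqref{theta}.

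For the first bullet, I would note that for $t>0$ the event $\{Y'>t\}$ equals $\{2/Z_1>t,\,2/Z_2>t\}=\{Z_1<2/t,\,Z_2<2/t\}$ (the margins being continuous, strict and non-strict inequalities agree almost surely). Applying the diagonal law at $z=2/t$ gives
\[
P(Y'>t)=P(Z_1<2/t,\,Z_2<2/t)=e^{-\theta t/2}=e^{-At},
\]
the last step using $\theta=2A$. Hence $Y'$ has survival function $e^{-At}$, i.e. it is exponential with parameter $A$, and $\mathbb{E}(Y')=1/A$ follows.

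For the second bullet, write $M=\max(Z_1,Z_2)$. Since $\{M\le z\}=\{Z_1\le z,\,Z_2\le z\}$, the diagonal law gives $P(M\le z)=e^{-\theta/z}$, a Fréchet law with scale $\theta$. I would then apply the increasing map $z\mapsto\log z$ and rewrite $\theta e^{-y}=e^{-(y-\log\theta)}$ to obtain
\[
P(\log M\le y)=\exp\{-e^{-(y-\log\theta)}\},
\]
which is the distribution function of $Gumbel(\log\theta,1)$ in the parametrization of the footnote. Equivalently $\log M\overset{D}{=}\log\theta+G$ with $G\sim Gumbel(0,1)$, so $Y=\log M-\gamma\overset{D}{=}\log\theta+(G-\gamma)$ and $\varepsilon:=G-\gamma\sim Gumbel(-\gamma,1)$, which is the claimed distributional identity. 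Finally, since the mean of $Gumbel(\mu,\beta)$ is $\mu+\beta\gamma$, I get $\mathbb{E}(\varepsilon)=-\gamma+\gamma=0$ and hence $\mathbb{E}(Y)=\log\theta$; this explains why the centering constant $\gamma$ is built into the definition of $Y$.

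There is no deep obstacle here: both statements are one-line change-of-variable computations resting on the single fact that the joint distribution on the diagonal is $e^{-\theta/z}$. The only points that require care are invoking homogeneity correctly to produce the factor $\theta$, matching the log-transformed law to the exact location--scale Gumbel parametrization fixed in the footnote, and recalling that the Gumbel mean involves the Euler constant, so that the shift by $\gamma$ indeed recenters $\varepsilon$ to mean zero.
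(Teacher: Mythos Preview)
Your proof is correct and follows essentially the same approach as the paper: both rest on the diagonal identity $P(\max(Z_1,Z_2)\le z)=e^{-\theta/z}$ and the relation $\theta=2A$. The only cosmetic difference is that the paper obtains the second bullet from the first via the substitution $Y=-\log(Y'/2)-\gamma$ and the fact that minus the logarithm of an exponential is Gumbel, whereas you derive both bullets independently from the diagonal law; the content is identical.
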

\begin{proof}
From $P\{\max(Z_1, Z_2)\leq x\}=e^{-\frac{\theta}{x}}$ and $\theta=2A$, we get
$$e^{-A x} = P\{\max(Z_1, Z_2)\leq \frac{2}{x}\}=P\{\min\bigg(\frac{2}{Z_1}, \frac{2}{Z_2}\bigg)\leq x\}.$$ Similar results can be found in \cite{Linda} using the beta distribution. The results for $Y$ follow easily when connecting $Y=-\log\big(\frac{Y'}{2}\big) -\gamma$. We use the fact that the minus logarithm of an exponential distribution is a Gumbel distribution and $\theta=2A$. The mean of the standard Gumbel distribution is $\gamma$, so we obtain centred Gumbel noise by subtracting $\gamma$. 
\end{proof}

We will refer to $Y$ defined in Lemma \ref{LLMax-projection} as a LogMax-projection. This transformation is very useful when the interest is  to make inference on $\theta$. This can be done for instance through classical regression tools. More importantly, it has the desired property that the noise acts additively. Hence, it can be fitted into the causal framework of \cite{Christina}. 

\begin{remark}
Lemma \ref{LLMax-projection} can be used to model the behaviour of the Pickands dependence function evaluated at the specific value $(\frac{1}{2},\frac{1}{2})$. Arguably, this value is the most important in most applications due to its interpretability and connections to the tail dependence measures such as (\ref{chi}). However, we can easily modify Lemma~\ref{LLMax-projection} for any value $\boldsymbol{\omega} = (\omega,1-\omega)$, $\omega \in (0,1)$,  by taking $Y=\log\{\max((1-\omega)Z_1, \omega Z_2)\}-\gamma$. 
\end{remark}

\subsection{Covariate-dependent structures and GAM}
Let us move now to the case where the extremal coefficient $\theta$ depends on a set of covariates $\textbf{X}\in\mathcal{X}\subseteq\mathbb{R}^d$. Consider a bivariate simple max-stable random vector $Z(\textbf{x})=(Z_1(\textbf{x}), Z_2(\textbf{x}))^\top$with a covariate-dependent extremal coefficient. For any $\textbf{x}\in\mathcal{X}$, we denote by $\theta(\textbf{x})$, the extremal coefficient between $Z_1, Z_2$ conditioned on $\textbf{X}=\textbf{x}$. 

From Lemma \ref{LLMax-projection}, we have that the LogMax-projection is defined as $$Y(\textbf{x}):= \log\{\max(Z_1(\textbf{x}), Z_2(\textbf{x}))\}-\gamma ,$$
and satisfies
$$Y(\textbf{x})\overset{D}{=}\log\{\theta(\textbf{x})\} + \varepsilon, \text{ where  }\varepsilon\sim Gumbel(-\gamma, 1).$$ As mentioned above, one can use this model formulation to make inference on $\theta(\textbf{x})$. We propose using the flexible generalized additive model (GAM) \citep{Wood} for the estimation of $\theta(\textbf{x})$. Under the GAM framework,  we make the following assumptions:

\begin{assumption}
Let $\theta(\textbf{x})=h\{\textbf{u}^\top\pmb{\beta} + \sum_{k=1}^K h_k(t_k)\}$, where:

\begin{itemize}
\item $h: \mathbb{R}\to [1,2]$ is a link function,
\item $(u_1, \dots, u_s)$ and $(t_1, \dots, t_K)$ are subsets of $\textbf{x}$, and
\item $h_k: \mathbb{T}_k\to\mathbb{R}$ are functions admitting a finite quadratic penalty representation (\cite{GAM}) supported on a closed set $\mathbb{T}_k\subset \mathbb{R}$ for all $k$. 
\end{itemize}
\end{assumption}
Note that in application, ignoring the fact that the extremal coefficient lies in $[1,2]$ and considering that it is real-valued instead, has no impact on the set of causal covariates discovery. For instance, we do not aim to estimate the extremal coefficient, per se, but rather compare the effect of the covariates under different environments. This does not affect the consistency.    
The estimation of $\theta(\textbf{x})$ is based on maximizing the penalized log likelihood:
\begin{equation}\label{GAMLikelihood}
l(\Gamma, \gamma) = l(\Gamma)  - \sum_{k=1}^K\gamma_k \int_{\mathbb{T}_k} h_k''(t_k)^2dt_k,
\end{equation}
where $\Gamma$ represents the set of unknown parameters ($\beta$ and the linear basis coefficients of each of the smooth functions $h_k$) and $l(\Gamma)$ is the log likelihood associated with the $n$ Gumbel observations $\{(y_i, \textbf{x}_i)_{i=1}^n\}$, that is, 
$l(\Gamma) = \sum_{i=1}^n \theta(\textbf{x}_i)-y_i - e^{\theta(\textbf{x}_i)-y_i} $. The integrated square second derivative penalties prevent interpolating estimate functions of $h_k$ and the smoothing parameters $\gamma_k$ control the trade-off between model goodness of fit and model smoothness. When $\gamma_k\to\infty$ the estimate of $h_k(x_k)$ tends to be a straight line.

The minimization of (\ref{GAMLikelihood})  is performed based on an outer-iteration procedure. At each iteration, $\Gamma$ and $\gamma$ are estimated separately by penalized iteratively re-weighted least squares and a prediction error method from generalized cross validation, respectively (for details, see Chapter 5 in \cite{Wood}). Under appropriate regularity assumptions on the functions $h_k$ and the smoothing parameters, the GAM resulting estimator $\hat{\theta}(\textbf{x}) = h\{\textbf{u}^\top\hat{\pmb{\beta}} + \sum_{k=1}^K \hat{h}_k(t_k)\}$ is consistent and asymptotically normal \citep{Yoshida, Wood}. More GAM-related theory is available, for example, in \cite{Wood2}.

\subsection{LogMax-projection in SCM}

Let us move to the causal setup. We assume that we observe variables from different environments $\mathcal{E}$, with the number of environments $|\mathcal{E}|$ being finite. In environment $e\in\mathcal{E}$, let $\textbf{X}^e=(X_1^e, \dots, X_d^e)^\top$ be our covariates, $\textbf{Z}^e=(Z_1^e, Z_2^e)$ be a simple max-stable random vector with covariate-dependent extremal coefficient $\theta(\textbf{X}^e)$. Let $Y^e$ be the corresponding LogMax-projection in environment $e\in\mathcal{E}$. In each environment, we observe $n_e\in\mathbb{N}$ iid observations $(\textbf{X}^e_i, \textbf{Z}^e_i)_{i=1}^{n_e}$.
For $S\subseteq \{1, \dots, d\}$, we denote $\theta(\textbf{X}^e_{S})$ the extremal coefficient between $Z_1, Z_2$ conditioned on $\textbf{X}^e_S=\textbf{x}^e_S$ (as a reminder, we denote $\textbf{X}_S=(X_{i_1}, \dots, X_{i_{|S|}})$ for a set $S=\{i_1, \dots, i_{|S|}\}\subseteq \{1, \dots, d\}$). 

Combining Assumption \ref{Assumption1} and Lemma \ref{LLMax-projection}, we assume that the structural equation is of the form 
$$
Y^e=\log\{\theta(\textbf{X}^e_{S^\star})\} + \varepsilon^e, \varepsilon^e \independent  \textbf{X}^e_{S^\star}, \,\, \varepsilon^e\sim Gumbel(-\gamma, 1),
$$
where $S^\star$ is the true set of parents. The following proposition states that the residuals are asymptotically invariant. 
\begin{proposition}
\label{Theorem 2}
Suppose Assumption \ref{Assumption1} holds. Let $\hat{\theta}$ be an estimator of $\theta$ that is consistent in the sense that  $\hat{\theta}(\textbf{X}_{S^\star}^e) - \theta(\textbf{X}_{S^\star}^e)\overset{P}{\to} 0$ for all $e\in\E$. Define the residuals $\hat{\eta}_{S^\star}^e := Y^e - \log\{\hat{\theta}(\textbf{X}_{S^\star}^e)\}$. Take $f\neq e\in\mathcal{E}$. Then, as the number of data (in both environments) tends to infinity, we have $\hat{\eta}^e_{S^\star}\overset{D}{=} \hat{\eta}^f_{S^\star}$. 
\end{proposition}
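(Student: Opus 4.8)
The goal is to show that the fitted residuals $\hat{\eta}^e_{S^\star} = Y^e - \log\{\hat{\theta}(\mathbf{X}^e_{S^\star})\}$ have the same limiting distribution across any two environments $e,f$. The plan is to reduce the fitted residuals to the true (unfitted) noise via the consistency of $\hat{\theta}$, and then invoke Assumption~\ref{Assumption1} to argue that the true noise is identically distributed in every environment. First I would write the true residual as $\eta^e_{S^\star} := Y^e - \log\{\theta(\mathbf{X}^e_{S^\star})\}$ and observe that, under Assumption~\ref{Assumption1} combined with Lemma~\ref{LLMax-projection}, the structural equation reads $Y^e = \log\{\theta(\mathbf{X}^e_{S^\star})\} + \varepsilon^e$ with $\varepsilon^e \sim Gumbel(-\gamma,1)$ and $\varepsilon^e \independent \mathbf{X}^e_{S^\star}$. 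Hence $\eta^e_{S^\star} = \varepsilon^e$ exactly, so the true residuals are $Gumbel(-\gamma,1)$-distributed in \emph{every} environment, giving $\eta^e_{S^\star} \overset{D}{=} \eta^f_{S^\star}$ immediately.

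The remaining work is to control the difference between the fitted and true residuals. I would write
\begin{equation*}
\hat{\eta}^e_{S^\star} - \eta^e_{S^\star} = \log\{\theta(\mathbf{X}^e_{S^\star})\} - \log\{\hat{\theta}(\mathbf{X}^e_{S^\star})\},
\end{equation*}
and show this converges to zero in probability. Since $\hat{\theta}(\mathbf{X}^e_{S^\star}) - \theta(\mathbf{X}^e_{S^\star}) \overset{P}{\to} 0$ by hypothesis, and $\log$ is continuous (indeed Lipschitz on any interval bounded away from zero, and here $\theta \in [1,2]$ is bounded away from $0$), the continuous mapping theorem yields $\log\{\hat{\theta}(\mathbf{X}^e_{S^\star})\} - \log\{\theta(\mathbf{X}^e_{S^\star})\} \overset{P}{\to} 0$. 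Therefore $\hat{\eta}^e_{S^\star} - \eta^e_{S^\star} \overset{P}{\to} 0$. Combining, $\hat{\eta}^e_{S^\star} = \eta^e_{S^\star} + o_P(1)$ with $\eta^e_{S^\star} \sim Gumbel(-\gamma,1)$, so by Slutsky's theorem $\hat{\eta}^e_{S^\star} \overset{D}{\to} Gumbel(-\gamma,1)$, and likewise $\hat{\eta}^f_{S^\star} \overset{D}{\to} Gumbel(-\gamma,1)$. Since both limiting laws coincide, the asymptotic equality in distribution follows.

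I expect the main subtlety to lie not in the algebra but in correctly interpreting the claimed conclusion: the statement $\hat{\eta}^e_{S^\star} \overset{D}{=} \hat{\eta}^f_{S^\star}$ should be read as equality of \emph{limiting} distributions (both converge to the same centred Gumbel law), rather than exact finite-sample equality, which is why the ``as the number of data tends to infinity'' qualifier is essential. A second point worth stating carefully is the applicability of the continuous mapping / Slutsky argument: because $\theta$ takes values in the compact interval $[1,2]$, $\log$ is uniformly continuous there and the convergence $\hat{\theta} - \theta \overset{P}{\to} 0$ transfers cleanly to the log-scale without needing a separate boundedness argument for $\hat{\theta}$. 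Beyond these, the remaining steps are routine invocations of the structural-equation form and the preservation of convergence in probability under continuous transformations.
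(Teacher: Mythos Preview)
Your proposal is correct and follows exactly the approach sketched in the paper: the paper's proof simply notes that $Y^e-\log\{\theta(\textbf{X}_{S^\star}^e)\}\overset{D}{=} Y^f-\log\{\theta(\textbf{X}_{S^\star}^f)\}$ by Assumption~\ref{Assumption1} and then appeals to consistency of $\hat{\theta}$, which is precisely your decomposition into the true residual plus an $o_P(1)$ term handled via continuous mapping and Slutsky. Your write-up is in fact more detailed than the paper's one-sentence justification, and your remarks on the boundedness of $\theta\in[1,2]$ ensuring the log is well-behaved, and on the correct reading of the asymptotic equality, are useful clarifications the paper leaves implicit.
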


The proof is based on the fact that $Y^e-\log\{\theta(\textbf{X}_{S^\star}^e)\}\overset{D}{=} Y^f-\log\{\theta(\textbf{X}_{S^\star}^f)\}$ from Assumption \ref{Assumption1} and that the estimator is consistent. Hence, the invariant residual distribution test (\ref{Invariant residual distribution test}) can be directly used. Note that Proposition \ref{Theorem 2} is formulated only for the set of true direct causes $S^\star$. 

We can directly test the invariance property $\hat{\eta}^e_{S}\overset{D}{=} \hat{\eta}^f_{S}$ for all $S\subseteq\{1,\dots, d\}$. This invariance should hold for the true set $S^\star$ and hence, the same argument as in Theorem \ref{Theorem_Level} can be used. In the following, we propose a methodology for testing this invariance property. 

\subsection{Generic algorithm}
\label{Generic_algorithm}

The proposed methodology can be summarized in the following algorithm.

\begin{enumerate}
\item For each margin separately: 
\begin{itemize}
\item Transform original data to unit Fréchet scale. This can be done by using the block maxima approach to fit a GEV distribution to the univariate series. 
\item The GEV parameters can also depend on other covariates.
\end{itemize}

\item In each environment $e\in\E$ separately:
\begin{itemize}
\item Compute the LogMax-projection $Y^e$, 
\end{itemize}
 
\item For each $S\subseteq\{1,\dots, d\}$:
\begin{itemize}
\item Pool the data from all environments and calculate $\hat{\theta}(\mathbf{X}_S)$,
\item Compute the residuals $\hat{\eta}_{S}^e = Y^e-\log\{\hat{\theta}(\mathbf{X}^e_{S})\}$,
\item Use a $k$-sample test, to test if the residuals from different environments $\hat{\eta}_{S}^{e}$ are equally distributed.
\end{itemize}
\item Get $\hat{S}^\star$ as the intersection of the subsets $S\subseteq \{1, \dots, d\}$ where the $k$-sample test was not rejected. 
\end{enumerate}

In practice, we do not have to repeat the algorithm for all $S\subseteq\{1,\dots, d\}$; if we do not reject the $k$-sample test for a set $S$, testing for a superset of $S$, is redundant.  

Several approaches can be used for transforming the data to unit Fréchet margins. In our application, we fit a covariate-dependent GEV distribution to block maxima. Alternatively, a peak over threshold approach \citep{Pickands1975} can be used as well. 

In our application, we use GAM to estimate  $\theta(\textbf{X}_S)$ based on minimizing (\ref{GAMLikelihood}). This choice was also used in \cite{Linda} and is a default method in non-linear ICP methodology in \cite{Christina}. 

As for the $k$-sample test, we use the Anderson-Darling $k$-sample test (AD-test). In dimension two, the distance between the two samples is $$AD_{n,m}=\frac{mn}{n+m}\int_{-\infty}^{\infty}\frac{(F_n(x)-G_m(x))^2}{H_{n+m}(x)(1-H_{n+m}(x))} dH_{n+m}(x),$$ where $H_{n+m}$ is an empirical distribution function of the pooled datasets. The $k$-sample AD-test is the natural generalization of its two-sample counterpart  \citep{AD-test}.   
It has typically better power than the Kolmogorov-Smirnov test \citep{AD-KStest}. This test is used at a level $\alpha=0.05$ level. Note that an estimation error can arise from two sources: from the estimation of the margins, and from the estimation of $\hat{\theta}$. Although the equality $\hat{\eta}_{S^\star}^{e}\overset{D}{=}\hat{\eta}_{S^\star}^{f}$ should asymptotically hold, we do not claim that a proper $\alpha$ level holds for a finite number of observations. In the following, we illustrate the methodology on a series of simulations.


\section{Simulations}\label{SectionSimulations}

We proceed with a simulation study. Section \ref{4.1} describes the study and results based on a specific example, discussing power with different causal strength, intervention strength and multicollinearity. Section \ref{Simulation_study_hidden_cofounder2} discusses a hidden confounder case, when  Assumption \ref{Assumption1} is not fulfilled. Section \ref{Simulation_study_effect_on_the_margins} deals with a case when we start with non-extreme data; we compare the choices of different thresholds and different marginal transformations and we discuss a case when we do not ``clean the margins” properly. 

We want to stress that we are only interested in inferring causal relations, not an inference on the extremal dependence structure itself.
\subsection{Intervention, causal strength, and multicollinearity}
\label{4.1}

Let $\textbf{Z}=(Z_1, Z_2)^\top$ have a simple bivariate max-stable distribution and $\mathbf{X}=(X_1,X_2,X_3)^\top$ be a set of covariates. We assume the distribution of $\mathbf{Z}$ to be parameterized by a logistic copula \eqref{LogisticModel} with parameter $\alpha=\alpha(X_1)=\frac{e^{\beta X_1}}{1+e^{\beta X_1}}$. Since $\alpha$ is a function of $X_1$ only, the true set of direct causes of our extremal dependence is therefore $S^\star = \{1\}$. We consider $X_2, X_3$ as other covariates, graphically represented by a SCM in the form $X_2\leftarrow X_1\to Y\to X_3$, where $Y=\log\{\max(Z_1, Z_2)\}-\gamma$ is the LogMax-projection. We observe the covariates $X_1, X_2, X_3$ and the change of environments are interventions on $X_1, X_2$, and $X_3$. Details about the generation of the random variables can be found in Table \ref{Table_for_simulation1} and the three parameters $p,q$, and $\beta$ are as follows
\begin{itemize}
\item the parameter $p$ describes the strength of intervention. If $p=0$ then there is no intervention on $X_1$,
\item the parameter $\beta$ represents the causal strength of $X_1$ on $Y$. If $\beta=0$ then $Y$ is independent of $X_1$,
\item the parameter $q$ describes the dependence between $X_1$ and  $X_2$. If $q=0$ then $cor(X_1, X_2)=0$. Moreover, if $q=0.25$, the correlation $cor(X_1, X_2)\approx 0.25$ and if $q=1$ then $cor(X_1, X_2)\approx 0.75$. 
\end{itemize}

We proceed with our causal methodology using the algorithm described in Section \ref{Generic_algorithm}. We use $50$ repetitions for different choices of $q, p,\beta$ and for two sample sizes $n=500$ and $n=1500$.  Figure~\ref{Figure789} describes how often we get $\hat{S}^\star=S^\star = \{1\}$ using our methodology. The upper axis in Figure~\ref{Figure789} represents the expectation $\mathbb{E}(\alpha)$ in the second environment under the corresponding intervention strength $p$ (note that  $\mathbb{E}(\alpha)=1/2$ in the first environment).  Obviously, the larger the intervention $p$, the higher the heterogeneity between the environments and hence the easier it is to find the correct causal set. The reason why we obtain worse results with a larger absolute value $q$ is that it becomes statistically harder to distinguish between correlated $X_1, X_2$ since both are good predictors (with an exception in the third environment). Not surprisingly, the larger the causal strength $\beta$, the better the results we obtain. 

\begin{table}[]
\caption{Definitions of environments for the simulation study in Section~\ref{4.1} with graphical representation $X_2\leftarrow X_1\to Y\to X_3$. The second and the third environments consist of interventions on $X_1, X_2$. The fourth environment changes $X_3$ by adding an additional relation $X_1\to X_3$. Note that Assumption \ref{Assumption1} is valid since $Y\mid X_1$ is the same in all environments. }
\label{Table_for_simulation1}
\begin{tabular}{|l|l|}
\hline
\multicolumn{1}{|c|}{\cellcolor[HTML]{EFEFEF}\textbf{First environment}}                 & \multicolumn{1}{c|}{\textbf{Second environment}}                                                     \\ \hline
\cellcolor[HTML]{EFEFEF}$X_1=\varepsilon_1$, where $\varepsilon_1\sim N(0,1)$            & $X_1=\textcolor{red}{p}+ \varepsilon_1$, where $\varepsilon_1\sim N(0,1)$                            \\ \hline
\cellcolor[HTML]{EFEFEF}$X_2= qX_1 + \nu,$ where $\nu\sim Exp(1)$                        & $X_2= qX_1 + \nu,$ where $\nu\sim Exp(1)$                                                            \\ \hline
\cellcolor[HTML]{EFEFEF}$\alpha=\alpha(X_1)=\frac{e^{\beta X_1}}{1+e^{\beta X_1}}$       & $\alpha=\alpha(X_1)=\frac{e^{\beta X_1}}{1+e^{\beta X_1}}$                                           \\ \hline
\cellcolor[HTML]{EFEFEF}$X_3 = Y + \varepsilon_3$, where $\varepsilon_3\sim N(0,1)$ & $X_3 = Y + \varepsilon_3$, where $\varepsilon_3\sim N(0,1)$                                     \\ \hline
\multicolumn{1}{|c|}{\textbf{Third environment}}                                         & \multicolumn{1}{c|}{\cellcolor[HTML]{EFEFEF}\textbf{Fourth environment}}                             \\ \hline
$X_1=\varepsilon_1$, where $\varepsilon_1\sim N(0,1)$                                    & \cellcolor[HTML]{EFEFEF}$X_1=\varepsilon_1$, where $\varepsilon_1\sim N(0,1)$                        \\ \hline
$X_2= \textcolor{red}{-10}$                                                              & \cellcolor[HTML]{EFEFEF}$X_2= qX_1 + \nu,$ where $\nu\sim Exp(1)$                                    \\ \hline
$\alpha=\alpha(X_1)=\frac{e^{\beta X_1}}{1+e^{\beta X_1}}$                               & \cellcolor[HTML]{EFEFEF}$\alpha=\alpha(X_1)=\frac{e^{\beta X_1}}{1+e^{\beta X_1}}$                   \\ \hline
$X_3 = Y + \varepsilon_3$, where $\varepsilon_3\sim N(0,1)$                         & \cellcolor[HTML]{EFEFEF}$X_3 = Y+ \textcolor{red}{X_1}+\varepsilon_3$, where $\varepsilon_3\sim N(0,1)$ \\ \hline
\end{tabular}
\end{table}

\begin{figure}[]
\centering
\includegraphics[scale=0.47]{ 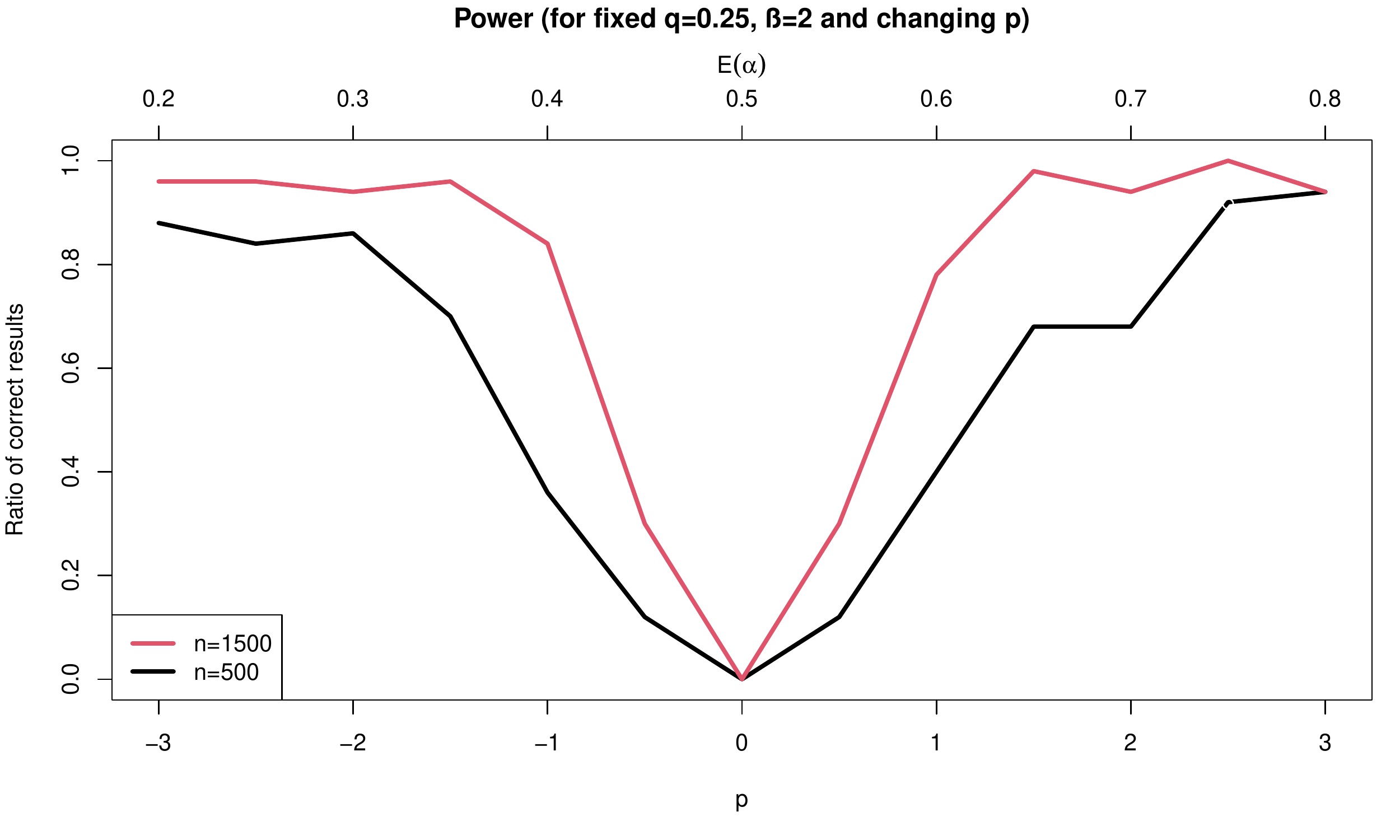}
\includegraphics[scale=0.47]{ 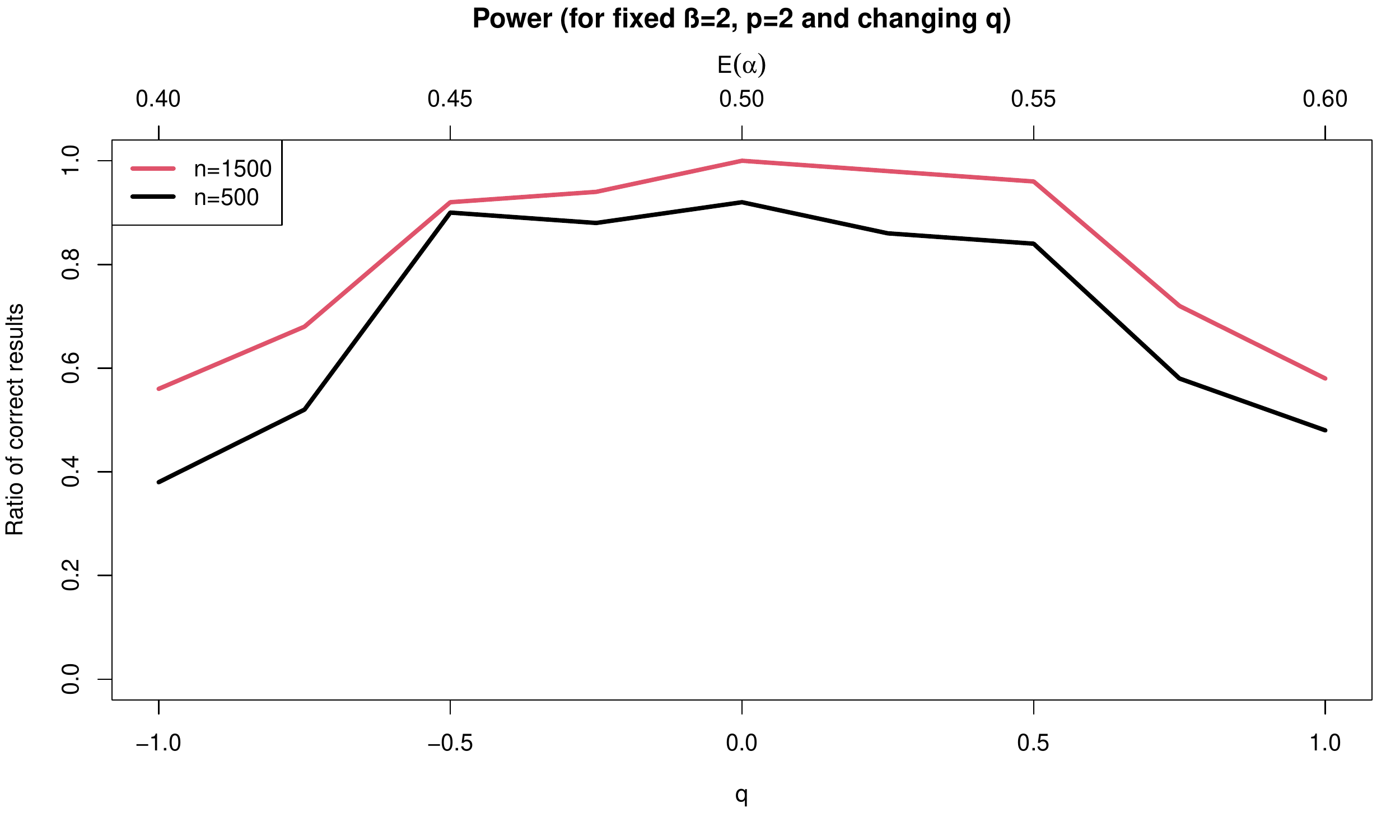}
\includegraphics[scale=0.47]{ 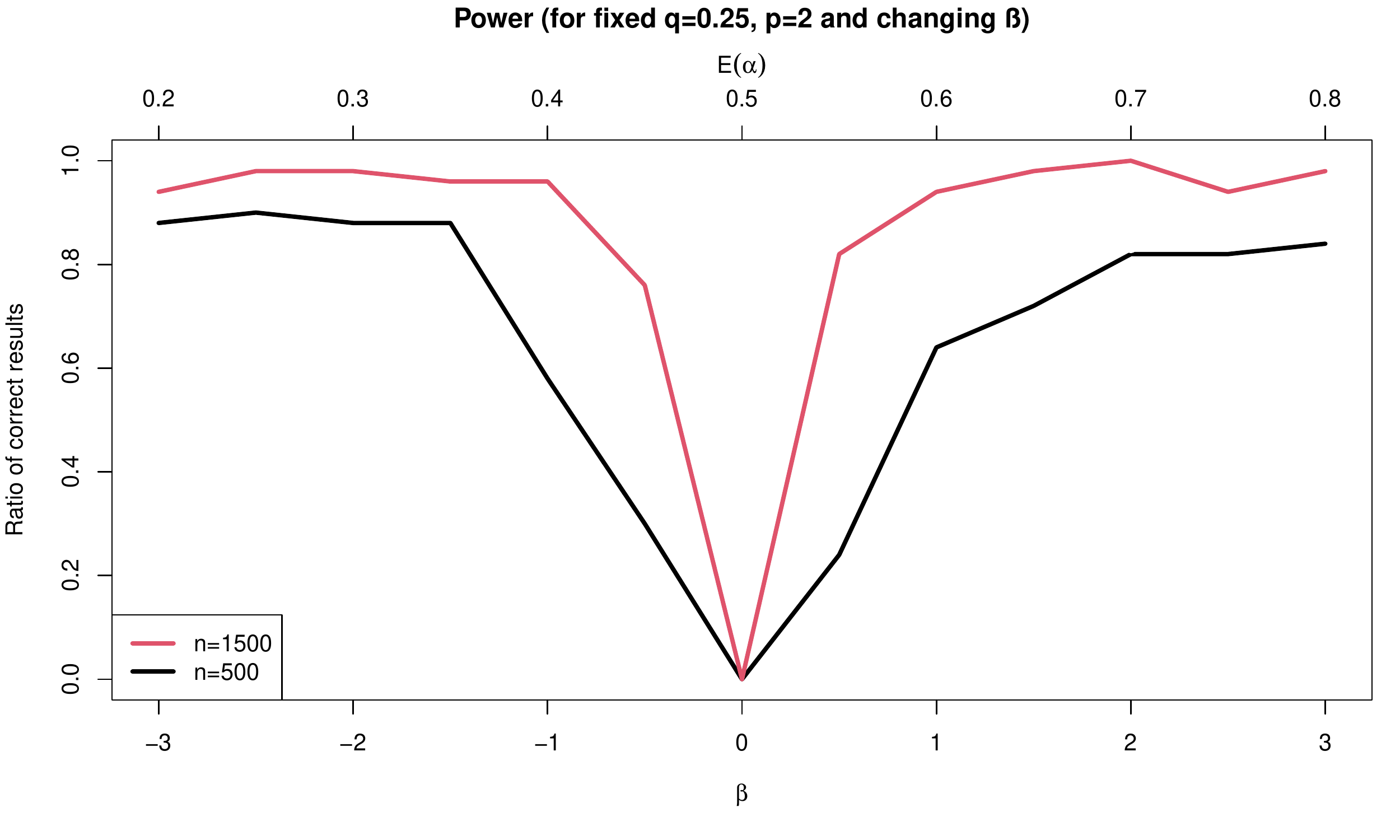}
\caption{Ratio of correct results for the different choices of parameters $p,\beta, q$ as described in Table~\ref{Table_for_simulation1} and for two sample sizes $n=500, 1500$. The upper axis represents the expectation $\mathbb{E}(\alpha)$ in the second environment.}
\label{Figure789}
\end{figure}

\subsection{Hidden confounder }
\label{Simulation_study_hidden_cofounder2}

We present a situation in which a hidden confounder of a causal covariate and the extremal dependence coefficient are present. The graphical representation of the model can be found in Figure \ref{Figure_Graph_HC2}.

\begin{figure}[]
\centering
\includegraphics[scale=0.35]{ 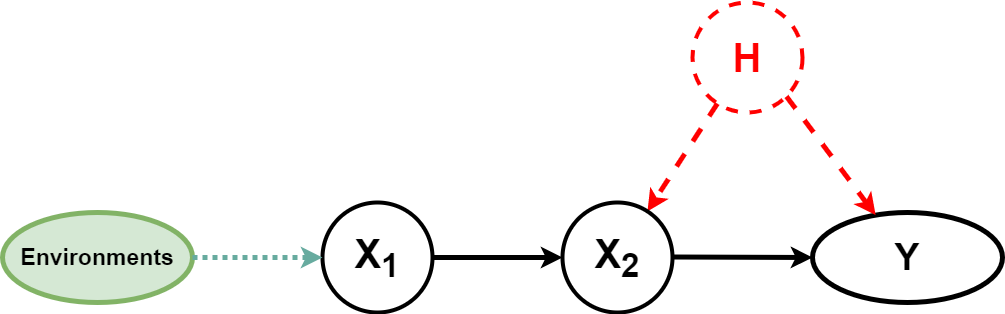}
\caption{Graphical representation of causal relations in the simulation study with a hidden confounder $H$. The green arrow from the environment represents the interventions described in subsection \ref{Simulation_study_hidden_cofounder2}.     }
\label{Figure_Graph_HC2}
\end{figure}

 Let $\textbf{Z}=(Z_1, Z_2)^\top$ have a simple bivariate max-stable distribution, $X_1, X_2$ are the observed covariates and $H$ is an unobserved covariate. We assume the distribution of $\mathbf{Z}$ to be parameterized by a Hüsler–Reiss copula (\ref{HR}) with parameter $\lambda=\lambda(X_2, H)=|X_2 + 3H|/5$. 

We intervene on $X_1$ and generate the variables in the corresponding environments:
\begin{enumerate}
\item $H+1\sim Exp(1)$, $X_1= \varepsilon_1, X_2 = X_1 + H + \varepsilon_2$ where $\varepsilon_1, \varepsilon_2\overset{iid}{\sim} N(0,1)$,
\item $H+1\sim Exp(1)$, $X_1= \varepsilon_1 + \color{red}{5}$, $X_2 = X_1 + H + \varepsilon_2$ where $\varepsilon_1, \varepsilon_2\overset{iid}{\sim} N(0,1)$.
\end{enumerate}
Moreover, in both environments, we compute the LogMax-projection~$Y$. 

Finally, we compute $\hat{S}^\star$ based on $X_1, X_2,Y$ only. Note that the only observed direct cause of $Y$ is $X_2$. However, due to the hidden variable $H$, Assumption \ref{Assumption1} no longer holds. It is easy to see that $Y\mid X_1$ is invariant through environments, however, $Y\mid X_2$ is not due to the hidden confounder. Simulations show how much this heuristic is justified in practice. 

For a range of sample sizes $n$, we generate the variables as described above, compute $\hat{S}^\star$ and repeat $50$ times. The results are described in Figure \ref{Figure_HiddenConfounder2}. We can see that for smaller $n$, we tend to \textit{not} reject $H_{0,S}$ for both $S=\{2\}, S=\{1\}$. Hence, the resulting estimate is an empty set (intersection of $\{1\}\cap\{2\}$). 
For a larger $n$, the test can recognise the effect of the hidden confounder and we tend to reject $H_{0, S}$ for $S=\{2\}$, and hence, the resulting estimate is $\hat{S^\star} = \{1\}$. 

Note that if the effect of $H$ is small, we tend to accept $H_{0,S}$ for $S=\{2\}$ and also for $S=\{1\}$ which results in $\hat{S^\star} = \emptyset$. On the other hand, if the effect of $H$ is large (larger than the effect of intervention), then the effect of $X_1$ on $Y$ will be (proportionally) small and we tend to accept $H_{0, S}$ for $S=\emptyset$. The statistical guarantees such as Theorem \ref{Theorem_Level} or Theorem \ref{Theorem 2} for  $\hat{S}^\star$ are no longer valid. However, following \citep{Peters}[Proposition 5],  $\hat{S}^\star$ will estimate some subset of its ancestors (instead of parents). For a more detailed discussion of hidden confounders in the non-extremal case, see  \citep{Peters}[Appendix C].

\begin{figure}[]
\centering
\includegraphics[scale=0.45]{ 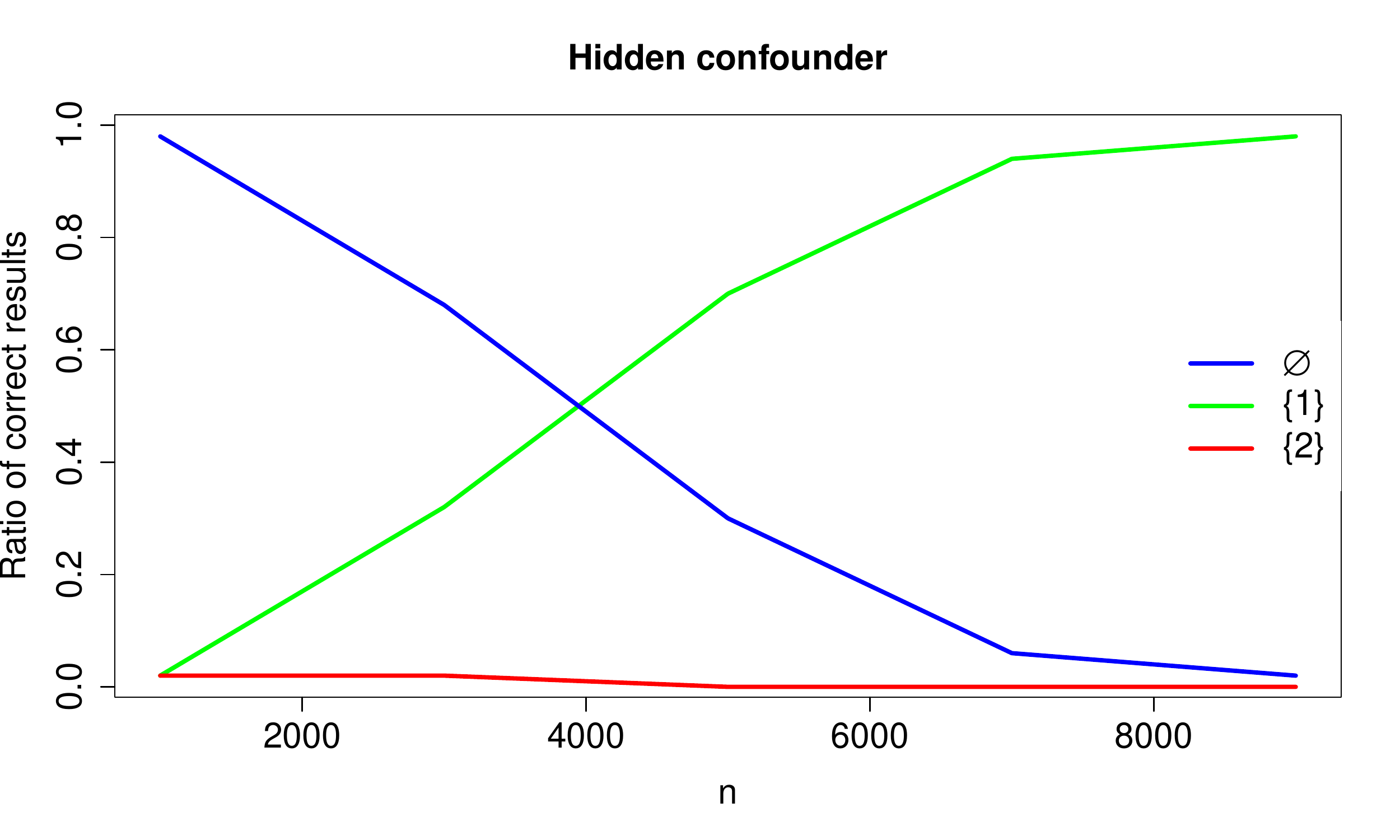}
\caption{ The figure corresponds to the simulation study from Section \ref{Simulation_study_hidden_cofounder2}, with a hidden confounder $H$. The figure represents the final estimate $\hat{S}^\star$. For a smaller $n$, the estimate $\hat{S}^\star = \emptyset$, and for a larger $n$,  the estimate $\hat{S}^\star = \{1\}$. 
}
\label{Figure_HiddenConfounder2}
\end{figure}

\subsection{Effect on the margins }
\label{Simulation_study_effect_on_the_margins}

In this section, we consider the case when the observed covariate has an effect on the margins as well as on the dependence structure. Suppose we observe two covariates $X_1, X_2$, where $X_1$ has an effect on the extremal dependence of $\textbf{Z}=(Z_1, Z_2)$ and $X_2$ has an effect on the margins $Z_1$ and $Z_2$. The graphical representation can be seen in Figure \ref{Figure_Graph_marginals}. 

\begin{figure}[]
\centering
\includegraphics[scale=0.4]{ 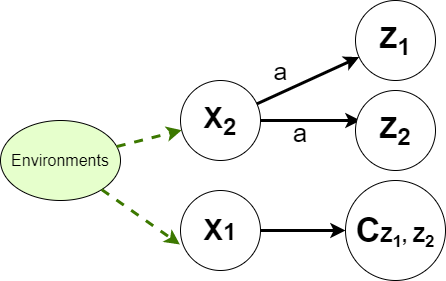}
\caption{Graphical representation of causal relations in the simulation study with different marginals. Here, $C_{Z_1, Z_2}$ represents the dependence structure and $Z_1, Z_2$ are marginals. Green arrows from the environment represent the interventions described in subsection \ref{Simulation_study_effect_on_the_margins}.     }
\label{Figure_Graph_marginals}
\end{figure}

We simulate the following data: for $n, \tau\in\mathbb{N}$, we consider $\tilde{\textbf{Z}}^1, \tilde{\textbf{Z}}^2, \dots, \tilde{\textbf{Z}}^{\tau\cdot n}$ are bivariate iid random variables distributed as a generic element $\tilde{\textbf{Z}}=(\tilde{Z}_1, \tilde{Z}_2)$, where margins $\tilde{Z}_1 = a X_2+\varepsilon_1, \tilde{Z}_2 = a X_2+\varepsilon_2$, where $\varepsilon_1, \varepsilon_2\overset{iid}{\sim}N(0,1)$ for a parameter $a\geq 0$. The dependence structure between $\tilde{Z}_1, \tilde{Z}_2$ is modeled by a  logistic copula with parameter $\alpha=\alpha(X_1)$. Hence, the dependence structure (as well as the extremal dependence structure) is a function of $X_1$ only.

We transform $\tilde{\textbf{Z}}^1, \tilde{\textbf{Z}}^2, \dots, \tilde{\textbf{Z}}^{\tau\cdot n}$ into a $\tau$ period-maxima, that is, we define $$\textbf{Z}^1:=\max(\tilde{\textbf{Z}}^1, \tilde{\textbf{Z}}^2, \dots, \tilde{\textbf{Z}}^{\tau}), \textbf{Z}^2:=\max(\tilde{\textbf{Z}}^{\tau+1}, \tilde{\textbf{Z}}^{\tau+2}, \dots, \tilde{\textbf{Z}}^{2\tau}), \dots,$$ where the maxima is component-wise maxima. 

Hence, we obtain independent maxima $\textbf{Z}^i=(Z^i_1,Z^i_2), i=1, \dots, n$. We model their margins by a GEV distribution with parameters depending on $X_2$. By fitting the covariate-dependent GEV distribution, we transform the margins into unit Frechet using (\ref{FrechMarg}). We model GEV parameters $$\mu=\mu_0+f_1(X_2), \sigma=\sigma_0 + f_2(X_2), \xi=\xi_0$$ and estimate them smoothly using the GAM framework. Finally, we compute the LogMax-projections $Y^1, \dots, Y^{{n}}$. 

We intervene on the covariates and observe the system. More precisely, we generate the variables in the corresponding environments as follows:
\begin{enumerate}
\item $X_1= 1$, $X_2 =\varepsilon_2$ where $\varepsilon_2\overset{iid}{\sim} Exp(1)$,
\item $X_1= \color{red}{5}$, $X_2 =\varepsilon_2$ where $\varepsilon_2\overset{iid}{\sim} Exp(1)$,
\item $X_1= 1$, $X_2 \overset{}{=}\color{red}{1}$.
\end{enumerate}
We consider the copula parameter in the form $\alpha(X_1) = \frac{1}{X_1}$. In both environments, we generate $\tilde{\textbf{Z}}^1, \tilde{\textbf{Z}}^2, \dots, \tilde{\textbf{Z}}^{\tau\cdot n}$ and compute the LogMax-projections as mentioned before. 

Finally, we estimate the set of direct causes of the extremal dependence structure following points 3 and 4 in Section \ref{Generic_algorithm}.  We repeat $50$ times for a range of parameters $a\in [0,2]$, $n=1000$ and $\tau\in\{50, 100\}$, where $a$ represents the effect of $X_2$ on the margins, $n$ is the size of the final dataset (of extremes) and $\tau$ is a size of the block maxima we take. 
The blue lines in Figure \ref{Figure_marginal_effect} represent the ratio of correctly estimating $\hat{S}^\star = \{X_1\}$. 

Now, consider a different question: What happens if we do \textit{not} include $X_2$ for the marginal modelling? 
We proceed with the same setup as before, but we model GEV parameters $\mu=\mu_0, \sigma=\sigma_0, \xi=\xi_0$ and estimate them using maximum likelihood.

Again, we estimate the set of direct causes of the extremal dependence structure following points 3 and 4 in Section \ref{Generic_algorithm}.  We repeat $50$ times for a range of parameters $a\in [0,2]$, $n=1000$ and $\tau\in\{50, 100\}$. The red lines in Figure \ref{Figure_marginal_effect} represent the ratio of correctly estimating $\hat{S}^\star = \{X_1\}$. 

We can see that ignoring the marginal effects of $X_2$ creates a spurious causal effect on the extremal dependence structure. Indeed, if $X_2$ is large, the margins of $\textbf{Z}$ tend to be both very large. Hence, extremes occur concurrently even though $X_2$ has no effect on the extremal copula, and our results are compromised. 

The same holds if $X_2$ is hidden; if there exists a hidden covariate that affects both margins with ``enough” strength, it can affect the inference about the extremal dependence structure. Therefore, it is good practice to include as many relevant covariates for the marginal inference as possible.

\begin{figure}[]
\centering
\includegraphics[scale=0.42]{ 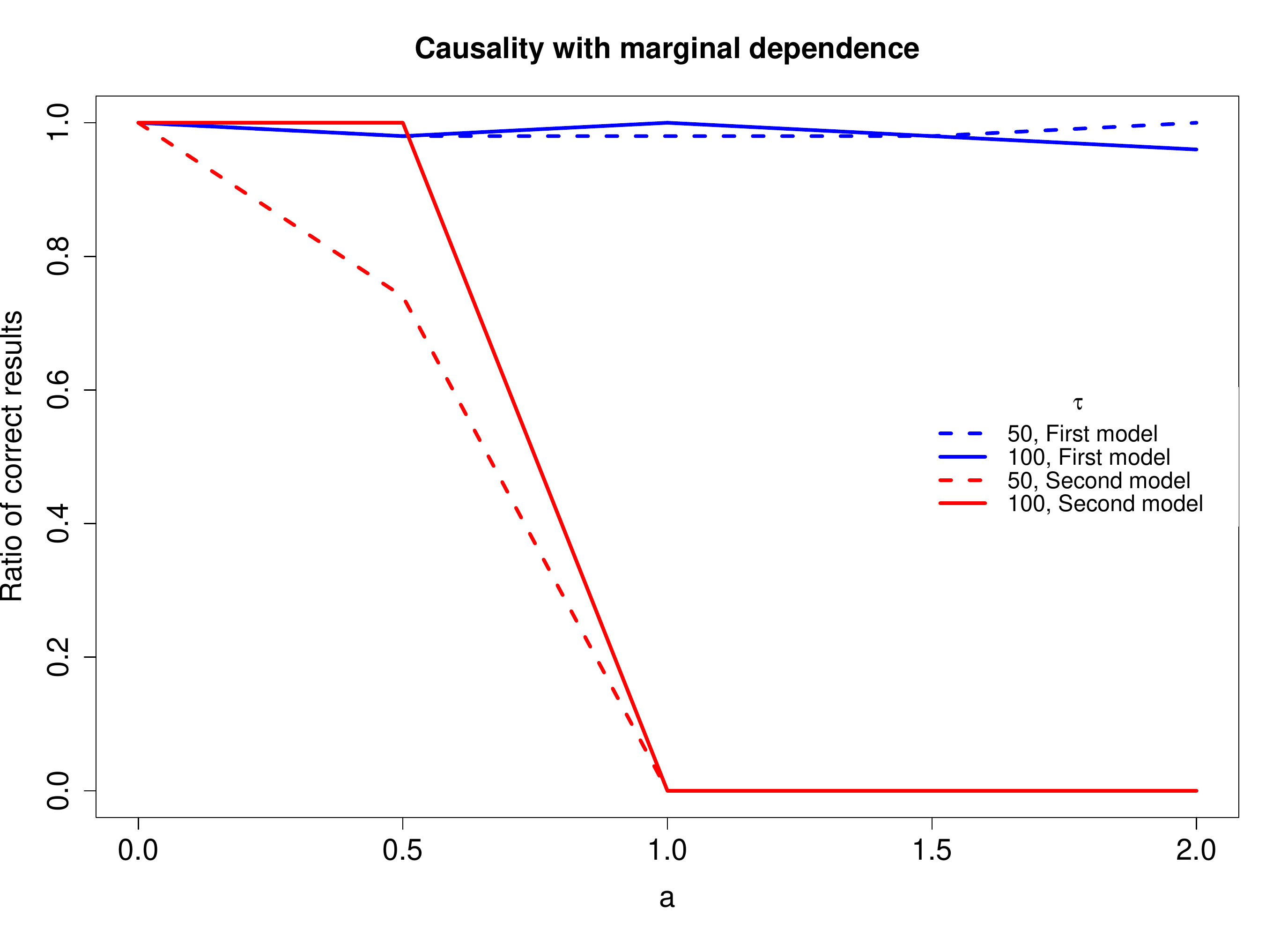}
\caption{ Simulation study from Section \ref{Simulation_study_effect_on_the_margins}, where $X_2$ has an effect on the marginals with strength $a\geq 0$ and the marginal block-maxima are of size $\tau$. The blue line correspond to the case in which we include $X_2$ in marginal modelling. The red line correspond to the case in which we do not include $X_2$ into marginal modelling. The lines represent the ratio of results when $\hat{S}^\star=\{1\}$. Each time when  $\hat{S}^\star\neq \{1\}$, then $\hat{S}^\star=\emptyset$. 
}
\label{Figure_marginal_effect}
\end{figure}


\section{NO$_2$ Application}

We illustrate the causal methodology for the tail dependence on a dataset of $\text{NO}_2$ concentration levels measured in the UK  \citep{openair}. To explain the extremal coefficient $\theta$ between the $\text{NO}_2$ levels of two different stations, we consider three different potential causal covariates $\textbf{X}=(X_1, X_2, X_3)^\top$. Namely, the distance (in kilometres) between the stations, the time (year) of the observation, and the type of area (traffic/non-traffic). A similar dataset is studied in  \cite{LindaNO2}. We wonder which of these covariates have a causal effect on the tail dependence. 
The choice of the time as a covariate may sound unusual as we expect that the physical mechanism is inherently given by laws of nature and does not change over time. However, it might be considered as a proxy for potential confounders such as the increase in temperature or $\text{CO}_2$.

\subsection{Transforming margins and LogMax-projecting}
We follow steps 1 and 2 from the algorithm (\ref{Generic_algorithm}). We transform our observations into bivariate simple max-stable ones. The first step is therefore transforming the margins to the unit Fréchet scale using transformation (\ref{FrechMarg}). 
We use weekly maxima (around 900 weeks at each station; recall that we have hourly measurements) and model these weekly maxima by a GEV distribution  \eqref{GEV} with time-varying parameters 
\begin{align*}
\mu(t_1, t_2)&=\mu_0 + f_1(t_1)+f_2(t_2),   \\
\sigma(t_1, t_2)&=\sigma_0 + g_1(t_1)+f_2(t_2),
\end{align*}
where $t_1$ represents the year and $t_2$ represents the week of the year, and $f_1, f_2, g_1, g_2$ are smooth functions. Inference is performed under the GAM framework \citep{evgam} and likelihood ratio tests are conducted to assess whether we need smoothly varying terms or whether parametric and sinusoidal terms already provide a good fit. The final fitted model is then used to transform the data at each station to the unit Fréchet distribution using the probability integral transform (\ref{FrechMarg}). Figure \ref{QQplot} shows a QQ-plot of the resulting transformed observations for one of the stations (RBKC Chelsea in London) against the unit Fréchet. It shows a reasonable fit besides a few outliers. 

The next step of the algorithm (\ref{Generic_algorithm}) is to pair the stations and compute the LogMax-projection. We pair only the stations with the same traffic type. We end up with 106 pairs of stations. For each pair, we compute the LogMax-projection transformation described in Lemma \ref{LLMax-projection}. 

\begin{figure}[]
\centering
\includegraphics[scale=0.9]{ 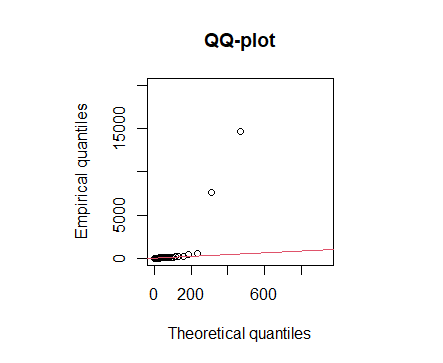}
\caption{QQ-plot for the transformed values against the unit Fréchet distribution corresponding to the station Newham Wren Close in London.}
\label{QQplot}
\end{figure}


\subsection{Defining environments}
\label{DefiningEnvironments}
In this application, we define one environment as one pair of stations.
Consider two pairs of stations for which all covariates remain unchanged, but $X_3$ changes from non-traffic to traffic. This can be considered as a "do-intervention" on the covariate $X_3$ since we assume that the underlying physical mechanism is the same regardless of the location of our stations. A few remarks on the choice of such environments are in order. 

\begin{remark}
Notice that this is not simple conditioning on traffic/non-traffic stations. Take measurements from one pair of stations. Imagine that we conduct an experiment where we artificially add traffic to those stations to see if the extremal coefficient changes. Instead of adding traffic to one station, we measure a second (different) pair that has the same value of covariates, except for traffic. The main difference here is that we assume that the data-generating process remains unchanged, apart from the traffic covariate. 
\end{remark}

\begin{remark}[Possible hidden variable]
The main drawback is the following: Is there a hidden variable that differs between the pairs and causally affects our response? This is represented in Figure \ref{Figure88}. We need to assume that there is no such variable or if there is, it has a negligible impact on the response variable.

\begin{figure}[]
\centering
\includegraphics[scale=0.4]{ 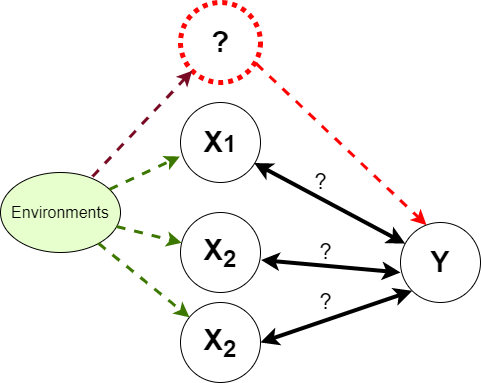}
\caption{Possible hidden variable in the application. }
\label{Figure88}
\end{figure}
\end{remark}

\begin{remark}[Time]
Each station is measured over a different period. Some stations started in 2003, some in 1997. For each pair, we take only the intersection of measured weeks. Hence, we can argue that by changing the pair, we also change the period of observations. However, this change is minor and almost negligible. Our findings support the fact that time is not a causal covariate. This might of course be due to changes between environments not being strong enough to find such evidence. 
\end{remark}

\subsection{Details for the choices of environments and results for one choice of stations}

We take five different environments, that is, five pairs of stations (10 different stations in total). We want the interventions to be ``strong”. For $X_1$, we choose one pair from each of the bins of distances described in Table \ref{bins_table}. As for $X_3$, we choose three pairs with the type traffic and two with the type non-traffic. 
In order to make an intervention on $X_2$, we would need to divide the dataset into different periods and compare them. However, due to the limited time span of our dataset, this is not an option, so we cannot intervene on $X_2$.

\begin{table}[]
\centering
\begin{tabular}{ |c|c|c|c|c| } 
 \hline
 <10km & (10km, 20km) & (20km, 50km) &(50km, 100km)&100km<  \\ 
 \hline
\end{tabular}
\caption{Definition of bins of distances. We aim to have one pair of stations for each bin.}
\label{bins_table}
\end{table}

We explain now the results on one specific choice for the environments. This choice is described in Table \ref{myTable}. In Section \ref{Repeating_application}, we repeat the experiments with 50 different choices to get more robust results.  We estimate $S^\star$ using the methodology described in Section \ref{Generic_algorithm}. For each $S\subseteq \{1,2,3\}$ we test the invariance property using step 3 in algorithm \ref{Generic_algorithm}. Results (for the specific choice of environments described in Table \ref{myTable}) are described in Table \ref{Table_results}. The hypotheses corresponding to sets $\{1\}, \{1,2\}, \{1,3\}, \{1,2,3\}$ were not rejected, and hence, $\hat{S}^\star= \{1\}\cap \{1,2\}\cap \{1,3\}\cap \{1,2,3\} = \{1\}$. 

\begin{table}[]
\begin{tabular}{@{}llll@{}}
\toprule
Pairs of sites                                    & Distances (km) & Range of years & Type       \\ \midrule
Knightsbridge n.3- Knightsbridge n.4      & 1.37           & 2000-2018      & Traffic    \\
Slough Colnbrook n.3-Slough Colnbrook n.4 & 11.36          & 2000-2018      & Non-traffic \\
Watford Town Hall - Newham                & 46.61          & 2008-2018      & Traffic    \\
Hillingdon South Ruislip - Cambridge n.4  & 94.90          & 2002-2017      & Traffic    \\
Trafford Moss Park - Hounslow Chiswick    & 316.50         & 1999-2018      & Non-traffic \\ \bottomrule
\end{tabular}
\caption{Pairs defining the five environments. }
\label{myTable}
\end{table}

\begin{table}[]
\centering
\begin{tabular}{@{}|l|l|@{}}
\toprule
\rowcolor[HTML]{FFFFFF} 
Choice for a set S & P-values                      \\ \midrule
\rowcolor[HTML]{FFFFFF} 
Empty set          & \cellcolor[HTML]{FD6864}1e-07 \\ \midrule
\rowcolor[HTML]{FFFFFF} 
1                  & \cellcolor[HTML]{38FFF8}0.11  \\ \midrule
\rowcolor[HTML]{FFFFFF} 
2                  & \cellcolor[HTML]{FD6864}0     \\ \midrule
\rowcolor[HTML]{FFFFFF} 
1,2                & \cellcolor[HTML]{38FFF8}0.19  \\ \midrule
\rowcolor[HTML]{FFFFFF} 
3                  & \cellcolor[HTML]{FD6864}6e-07 \\ \midrule
\rowcolor[HTML]{FFFFFF} 
1,3                & \cellcolor[HTML]{38FFF8}0.30  \\ \midrule
\rowcolor[HTML]{FFFFFF} 
2,3                & \cellcolor[HTML]{FD6864}0     \\ \midrule
\rowcolor[HTML]{FFFFFF} 
1,2,3              & \cellcolor[HTML]{38FFF8}0.41  \\ \midrule
\rowcolor[HTML]{FFFFFF} 
Intersection       & \textit{\textbf{1}}           \\ \bottomrule
\end{tabular}
\caption{Resulting p-values for one specific choice of the stations from Table \ref{myTable}. The p-values are computed for each subset of our covariates. It seems that $X_1$ is the only identified causal covariate.  }
\label{Table_results}
\end{table}

\subsection{Repeating results with different stations and discussing results}
\label{Repeating_application}

We repeat the same steps with different choices of stations in order to obtain some confidence bounds for each p-value. In each repetition, we randomly choose five pairs of stations satisfying the following: 

\begin{itemize}
\item There is one pair for each bin of distances from Table \ref{bins_table},
\item Each station is included at most once,
\item There are three pairs from traffic type and two from non-traffic type.
\end{itemize}

Then, we compute the corresponding p-values for each subset $S$. We use 50 repetitions, and boxplots for the p-values associated with each subset of indices are drawn in Figure \ref{boxplot}. Again, this analysis suggests that $X_1$ is the only causal covariate.

\begin{figure}[]
\centering
\includegraphics[scale=0.7]{ 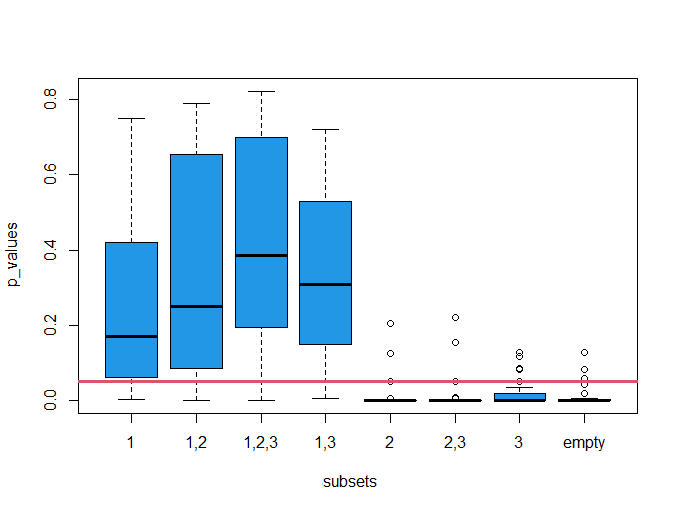}
\caption{ Boxplots of p-values for each subset of $\{1,2,3\}$ when randomly choosing 10 stations 50 times (fulfilling the constraints defined in Section \ref{Repeating_application}). }
\label{boxplot}
\end{figure}

This result is not surprising. It is expected that the extremal dependence decreases with increasing distance. The effect of $X_1$ on the extremal dependence is shown in Figure \ref{effect_of_distance}. 
The fact that $X_2$ was not marked as a cause is not surprising either. We would need much stronger interventions and changes in time observations to detect anything. Additionally, it is unclear how time can cause extremal dependence (there may be several time-changing factors, but we do not know them explicitly). 
As for $X_3$, it seems that the type of the site is not a direct cause of the extremal dependence. 


\begin{figure}[h]
\centering
\includegraphics[scale=0.2]{ 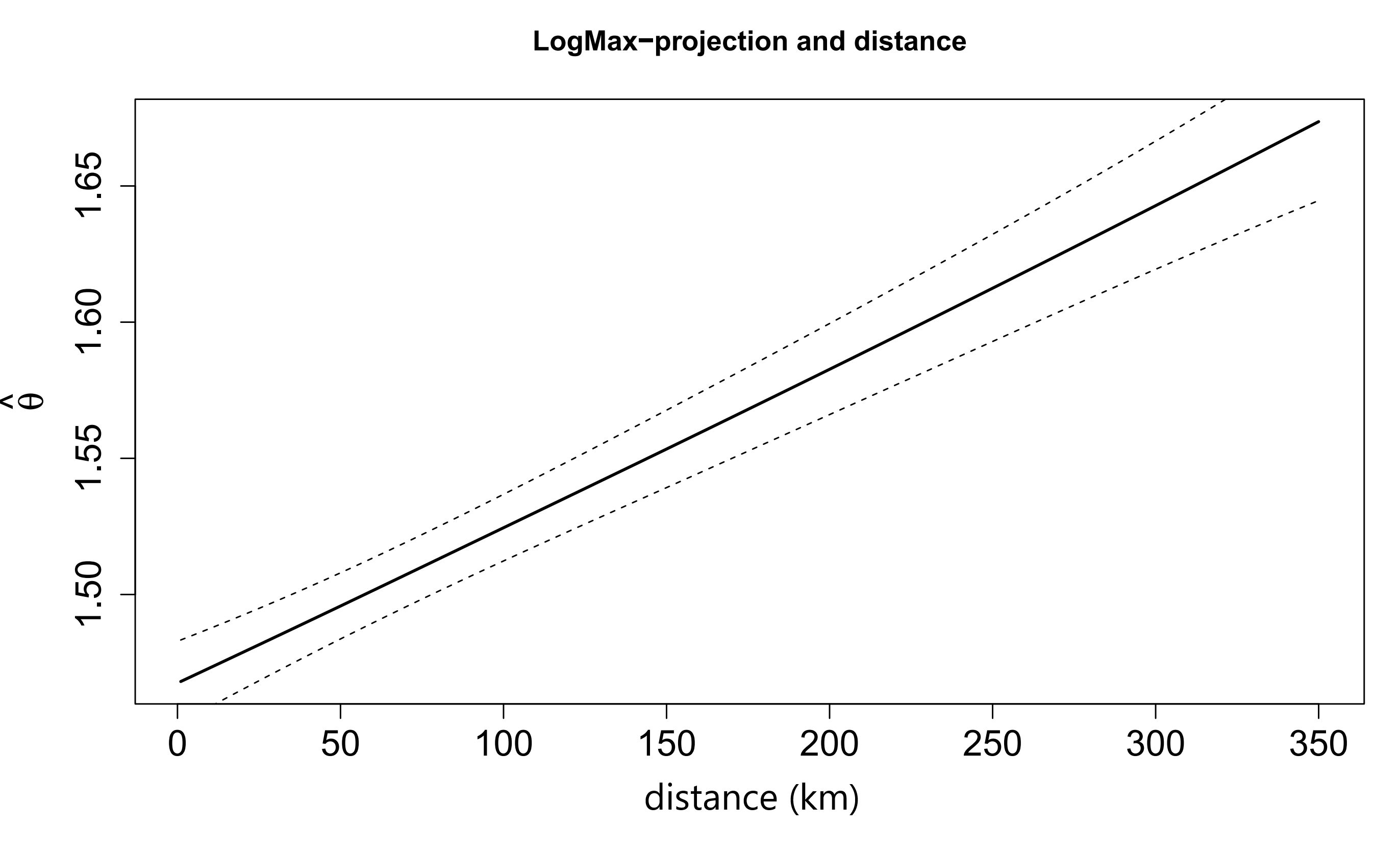}
\caption{ Smooth estimation of the effect of distance on the LogMax-projection. It is based on all available pairs of stations pooled together (With four exceptions. If there is a triplet of stations closer than 5km from each other, we only take one pair). The estimate may be unreliable since the same station is used in several pairs, which creates a dependence between some pairs. We included only the causal covariate. The estimation and the confidence intervals are computed using the GAM framework by regressing the LogMax-projection on the distance.
We can see that the larger the distance, the larger the extremal coefficient (and smaller the extremal dependence).}
\label{effect_of_distance}
\end{figure}

\section{Conclusion}

This paper addresses the problem of causal discovery in the area of multivariate extremes. For a covariate-dependent extremal coefficient  $\theta(\textbf{X})$ between two variables, we discover which of the covariates are not only significant but have a causal influence on the joint tail behavior. We introduced a LogMax-projection, which serves as a basis variable for inference about the extremal coefficient. A causal methodology from \cite{Peters} was adapted in our context to estimate the causal covariates. Algorithm (\ref{Generic_algorithm}) provides a consistent methodology. We used a very flexible GAM modelling for statistical inference.

We stress that the results rely on the absence of a hidden confounder, an assumption that can and should be debated in many situations. The methodology is robust against hidden covariates to some extent. However, if there is a hidden covariate, it must be invariant between environments. We discussed in simulations in Section \ref{SectionSimulations} the case in which this assumption is violated. 

 We believe that this methodology can be very useful in many fields. More practical examples come from econometrics. We often see an extremal dependence between stocks. If one stock has a massive fall, other stocks seem to follow this fall too. Exploiting the causes of such behavior can be very useful and can broadly impact the field. 

Possible future work can relate to the more general multivariate extensions of our methodology. We introduced LogMax-projection to work only with a univariate variable representing a complex multivariate tail dependence structure. However, it can be interesting to work directly with a tail dependence structure of multivariate random variables with dimensions more than $2$ and infer which covariates are significant and causal.


\section*{Conflict of interest}
The authors declare that they have no known competing financial interests or personal relationships that could have appeared to influence the work reported in this paper.

\section*{Acknowledgements}
The work was supported by the Swiss National Science Foundation.

\clearpage 
\bibliography{bibliography}

\begin{thebibliography}{49}
\providecommand{\natexlab}[1]{#1}
\providecommand{\url}[1]{\texttt{#1}}
\expandafter\ifx\csname urlstyle\endcsname\relax
  \providecommand{\doi}[1]{doi: #1}\else
  \providecommand{\doi}{doi: \begingroup \urlstyle{rm}\Url}\fi

\bibitem[Beirlant et~al.(2006)Beirlant, Goegebeur, Segers, Teugels, D.D.Waal,
  and Ferro]{beirlant2006statistics}
J.~Beirlant, Y.~Goegebeur, J.~Segers, J.L. Teugels, D.D.Waal, and C.~Ferro.
\newblock \emph{Statistics of Extremes: Theory and Applications}.
\newblock Wiley Series in Probability and Statistics. Wiley, 2006.

\bibitem[B{\"u}hlmann(2020)]{Buhlmann}
P.~B{\"u}hlmann.
\newblock {Invariance, causality and robustness}.
\newblock \emph{Statistical Science}, 35\penalty0 (3):\penalty0 404 -- 426,
  2020.
\newblock \doi{10.1214/19-STS721}.

\bibitem[Carslaw and Ropkins(2022)]{openair}
D.~C. Carslaw and K.~Ropkins.
\newblock Openair --- an {R} package for air quality data analysis.
\newblock \emph{Environmental Modelling and Software}, 2022.
\newblock ISSN 1364-8152.
\newblock \doi{10.1016/j.envsoft.2011.09.008}.

\bibitem[Chang et~al.(2015)Chang, Karr, and Schadt]{Biology_causal_inference}
R.~Chang, J.R. Karr, and E.E. Schadt.
\newblock Causal inference in biology networks with integrated belief
  propagation.
\newblock \emph{Pac Symp Biocomput}, pages 359--370, 2015.
\newblock \doi{10.2307/1968974}.

\bibitem[Coles(2001)]{Coles}
S.~Coles.
\newblock \emph{An Introduction to Statistical Modeling of Extreme Values}.
\newblock Springer Series in Statistics. Springer-Verlag, 2001.
\newblock ISBN 1-85233-459-2.

\bibitem[Coles and Tawn(1991)]{Coles2}
S.G. Coles and J.~A. Tawn.
\newblock Modelling extreme multivariate events.
\newblock \emph{Journal of the Royal Statistical Society. Series B
  (Methodological)}, 1991.
\newblock ISSN 00359246.
\newblock URL \url{http://www.jstor.org/stable/2345748}.

\bibitem[Dawid(2006)]{Dawid}
A.~P. Dawid.
\newblock Counterfactuals, hypotheticals and potential responses: a
  philosophical examination of statistical causality.
\newblock \emph{Technical report, Department of Statistical Science,University
  College London, Research Report}, 2006.

\bibitem[Deuber et~al.(2021)Deuber, Li, Engelke, and Maathuis]{Deuber}
D.~Deuber, J.~Li, S.~Engelke, and M.~Maathuis.
\newblock Estimation and inference of extremal quantile treatment effects for
  heavy-tailed distributions.
\newblock \emph{preprint}, 10 2021.
\newblock \doi{10.48550/arXiv.2110.06627}.

\bibitem[Eberhardt and Scheines(2007)]{Interventions}
F.~Eberhardt and R.~Scheines.
\newblock Interventions and causal inference.
\newblock \emph{Philosophy of Science}, 2007.
\newblock URL \url{http://www.jstor.org/stable/10.1086/525638}.

\bibitem[Engelke and Ivanovs(2021)]{EngelkeSparsity}
S.~Engelke and J.~Ivanovs.
\newblock Sparse structures for multivariate extremes.
\newblock \emph{Annual Review of Statistics and Its Application}, 8\penalty0
  (1):\penalty0 241--270, 2021.
\newblock \doi{10.1146/annurev-statistics-040620-041554}.

\bibitem[Engelke et~al.(2015)Engelke, Malinowski, Kabluchko, and
  Schlather]{HuslerReiss}
S.~Engelke, A.~Malinowski, Z.~Kabluchko, and M.~Schlather.
\newblock Estimation of {H}üsler–{R}eiss distributions and
  {B}rown–{R}esnick processes.
\newblock \emph{Journal of the Royal Statistical Society. Series B (Statistical
  Methodology)}, 2015.
\newblock URL \url{http://www.jstor.org/stable/24774733}.

\bibitem[Fisher and Tippett(1928)]{fisher_tippett_1928}
R.A. Fisher and L.H.C. Tippett.
\newblock Limiting forms of the frequency distribution of the largest or
  smallest member of a sample.
\newblock \emph{Mathematical Proceedings of the Cambridge Philosophical
  Society}, 24\penalty0 (2):\penalty0 180–190, 1928.
\newblock \doi{10.1017/S0305004100015681}.

\bibitem[Gnecco et~al.(2020)Gnecco, Meinshausen, Peters, and Engelke]{Gnecco}
N.~Gnecco, N.~Meinshausen, J.~Peters, and S.~Engelke.
\newblock Causal discovery in heavy-tailed models.
\newblock \emph{The Annals of Statistics}, 49, 2020.
\newblock \doi{10.1214/20-AOS2021}.

\bibitem[Gnedenko(1943)]{GEV}
B.~Gnedenko.
\newblock Sur la distribution limite du terme maximum d'une serie aleatoire.
\newblock \emph{Annals of Mathematics}, 44\penalty0 (3):\penalty0 423--453,
  1943.
\newblock \doi{10.2307/1968974}.

\bibitem[Green and Silverman(1994)]{GAM}
P.~Green and B.~Silverman.
\newblock \emph{Nonparametric Regression and Generalized Linear Models: A
  Roughness Penalty Approach}.
\newblock Chapman and Hall/CRC, 1994.
\newblock ISBN 9780412300400.
\newblock URL
  \url{https://www.routledge.com/Nonparametric-Regression-and-Generalized-Linear-Models-A-roughness-penalty/Green-Silverman/p/book/9780412300400}.

\bibitem[Haavelmo(1944)]{Stara_causality}
T.~Haavelmo.
\newblock The probability approach in econometrics.
\newblock \emph{Econometrica}, 1944.
\newblock URL \url{https://www.jstor.org/stable/pdf/1906935.pdf}.

\bibitem[Hamra et~al.(2018)Hamra, Laden, Cohen, Raaschou-Nielsen, M.Brauer, and
  Loomis]{NO2cancer}
G.B. Hamra, F.~Laden, A.J. Cohen, O.~Raaschou-Nielsen, M.Brauer, and D.~Loomis.
\newblock Invariant causal prediction for nonlinear models.
\newblock \emph{Journal of Causal Inference}, 6\penalty0 (2), 2018.
\newblock \doi{10.1515/jci-2017-0016}.

\bibitem[Heinze-Deml et~al.(2018)Heinze-Deml, Peters, and
  Meinshausen]{Christina}
C.~Heinze-Deml, J.~Peters, and N.~Meinshausen.
\newblock Invariant causal prediction for nonlinear models.
\newblock \emph{Journal of Causal Inference}, 6\penalty0 (2):\penalty0
  20170016, 2018.
\newblock \doi{10.1515/jci-2017-0016}.

\bibitem[Joe(1997)]{Joe}
H.~Joe.
\newblock \emph{Multivariate Models and Multivariate Dependence Concepts, 1st
  Edition}, volume~1.
\newblock Chapman and Hall/CRC, 1 edition, 1997.
\newblock URL \url{https://doi.org/10.1201/9780367803896}.

\bibitem[Kiriliouk and Naveau(2020)]{10.1214/20-AOAS1355}
A.~Kiriliouk and P.~Naveau.
\newblock {Climate extreme event attribution using multivariate
  peaks-over-thresholds modeling and counterfactual theory}.
\newblock \emph{The Annals of Applied Statistics}, 2020.
\newblock \doi{10.1214/20-AOAS1355}.

\bibitem[Mhalla et~al.(2017)Mhalla, Chavez-Demoulin, and Naveau]{Linda}
L.~Mhalla, V.~Chavez-Demoulin, and P.~Naveau.
\newblock Non-linear models for extremal dependence.
\newblock \emph{Journal of Multivariate Analysis}, 159:\penalty0 49--66, 2017.
\newblock ISSN 0047-259X.
\newblock \doi{10.1016/j.jmva.2017.04.006}.

\bibitem[Mhalla et~al.(2019{\natexlab{a}})Mhalla, Chavez-Demoulin, and
  Dupuis]{Linda_causality}
L.~Mhalla, V.~Chavez-Demoulin, and D.J. Dupuis.
\newblock Causal mechanism of extreme river discharges in the upper {D}anube
  basin network.
\newblock \emph{Journal of the Royal Statistical Society: Series C (Applied
  Statistics)}, 2019{\natexlab{a}}.
\newblock \doi{10.48550/ARXIV.1907.03555}.

\bibitem[Mhalla et~al.(2019{\natexlab{b}})Mhalla, de~Carvalho, and
  Chavez-Demoulin]{Linda3}
L.~Mhalla, M.~de~Carvalho, and V.~Chavez-Demoulin.
\newblock Regression-type models for extremal dependence.
\newblock \emph{Scandinavian Journal of Statistics}, 46\penalty0 (4):\penalty0
  1141--1167, 2019{\natexlab{b}}.
\newblock \doi{10.1111/sjos.12388}.

\bibitem[Mhalla et~al.(2019{\natexlab{c}})Mhalla, Opitz, and
  Chavez-Demoulin]{LindaNO2}
L.~Mhalla, T.~Opitz, and V.~Chavez-Demoulin.
\newblock Exceedance-based nonlinear regression of tail dependence.
\newblock \emph{Extremes}, 22, 2019{\natexlab{c}}.

\bibitem[Pasche et~al.(2021)Pasche, Chavez-Demoulin, and Davison]{Pasche}
O.~C. Pasche, V.~Chavez-Demoulin, and A.C. Davison.
\newblock Causal modelling of heavy-tailed variables and confounders with
  application to river flow.
\newblock \emph{Submitted}, 2021.
\newblock \doi{10.48550/ARXIV.2110.06686}.

\bibitem[Pauli and Coles(2001)]{Pauli}
F.~Pauli and S.~Coles.
\newblock Penalized likelihood inference in extreme value analyses.
\newblock \emph{Journal of Applied Statistics}, 28\penalty0 (5):\penalty0
  547--560, 2001.
\newblock \doi{10.1080/02664760120047889}.

\bibitem[Pearl(2009)]{Pearl_book}
J.~Pearl.
\newblock \emph{Causality: Models, Reasoning and Inference}.
\newblock Cambridge University Press, 2009.
\newblock ISBN 978-0521895606.

\bibitem[Peters et~al.(2014)Peters, Mooij, and Sch\"olkopf]{Peters2014}
J.~Peters, J.M. Mooij, and B.~Sch\"olkopf.
\newblock Causal discovery with continuous additive noise models.
\newblock \emph{Journal of Machine Learning Research}, 15:\penalty0 2009--2053,
  2014.

\bibitem[Peters et~al.(2016)Peters, Bühlmann, and Meinshausen]{Peters}
J.~Peters, P.~Bühlmann, and N.~Meinshausen.
\newblock Causal inference by using invariant prediction: identification and
  confidence intervals.
\newblock \emph{Journal of the Royal Statistical Society: Series B (Statistical
  Methodology)}, 78\penalty0 (5):\penalty0 947--1012, 2016.
\newblock URL \url{https://doi.org/10.1111/rssb.12167}.

\bibitem[Peters et~al.(2017)Peters, Janzing, and
  Schölkopf]{Elements_of_Causal_Inference}
J.~Peters, D.~Janzing, and B.~Schölkopf.
\newblock \emph{Elements of Causal Inference: Foundations and Learning
  Algorithms}.
\newblock The MIT Press, 2017.
\newblock ISBN 0262037319.
\newblock URL
  \url{https://library.oapen.org/bitstream/id/056a11be-ce3a-44b9-8987-a6c68fce8d9b/11283.pdf}.

\bibitem[Pfister and Peters(2022)]{Peters_sparsity}
N.~Pfister and J.~Peters.
\newblock Identifiability of sparse causal effects using instrumental
  variables.
\newblock \emph{Conference on Uncertainty in Artificial Intelligence}, 2022.
\newblock \doi{10.48550/ARXIV.2203.09380}.

\bibitem[Pickands(1975)]{Pickands1975}
J.~Pickands.
\newblock Statistical inference using extreme order statistics.
\newblock \emph{The Annals of Statistics}, 3\penalty0 (1):\penalty0 119 -- 131,
  1975.
\newblock \doi{10.1214/aos/1176343003}.

\bibitem[Pickands(1981)]{Pickands1981}
J.~Pickands.
\newblock Multivariate extreme value distributions.
\newblock \emph{Bull. Int. Statist. Inst. 49}, pages 859--78, 1981.

\bibitem[Razali and Yap(2011)]{AD-KStest}
N.M. Razali and B.W. Yap.
\newblock Power comparisons of shapiro-wilk, kolmogorov-smirnov, lilliefors and
  anderson-darling tests.
\newblock \emph{J. Stat. Model. Analytics}, 2, 01 2011.

\bibitem[Resnick(1987)]{Resnick}
S.I Resnick.
\newblock \emph{Extreme Values, Regular Variation and Point Processes}.
\newblock Springer, NY, 1987.
\newblock ISBN 978-0-387-75953-1.
\newblock URL \url{https://link.springer.com/book/10.1007/978-0-387-75953-1}.

\bibitem[Scholz and Stephens(1987)]{AD-test}
F.W. Scholz and M.~A. Stephens.
\newblock K-sample {A}nderson-{D}arling tests.
\newblock \emph{Journal of the American Statistical Association}, 82\penalty0
  (399):\penalty0 918–924, 1987.
\newblock \doi{10.2307/2288805}.

\bibitem[Schölkopf et~al.(2012)Schölkopf, Janzing, Peters, Sgouritsa, Zhang,
  and Mooij]{ML_causal}
B.~Schölkopf, D.~Janzing, J.~Peters, E.~Sgouritsa, K.~Zhang, and J.~Mooij.
\newblock On causal and anticausal learning.
\newblock \emph{In Proceedings of the 29th International Conference on Machine
  Learning (ICML)}, 2012.
\newblock URL \url{http://www.jstor.org/stable/10.1086/525638}.

\bibitem[Shi et~al.(2014)Shi, Xie, Qin, Si, Dou, and Du]{Shi_2014}
P.~Shi, P.~Xie, M.~Qin, F.~Si, K.~Dou, and K.~Du.
\newblock Cluster analysis for daily patterns of {SO}2 and {NO}2 measured by
  the {DOAS} system in {X}iamen.
\newblock \emph{Aerosol and Air Quality Research}, 2014.
\newblock \doi{10.4209/aaqr.2013.05.0160}.

\bibitem[Shimizu et~al.(2006)Shimizu, Hoyer, Hyvärinen, and Kerminen]{Lingam}
S.~Shimizu, P.~Hoyer, A.~Hyvärinen, and A.~Kerminen.
\newblock A linear non-gaussian acyclic model for causal discovery.
\newblock \emph{Journal of Machine Learning Research}, 7:\penalty0 2003--2030,
  10 2006.

\bibitem[Smith(1990)]{Smith}
R.L. Smith.
\newblock Extreme value theory.
\newblock \emph{Handbook of Applicable Mathematics}, \penalty0 (7):\penalty0
  437--447, 1990.

\bibitem[Spirtes et~al.(2001)Spirtes, Glymour, and Scheines]{PCalgorithm}
P.~Spirtes, C.~Glymour, and R.~Scheines.
\newblock \emph{Causation, Prediction, and Search, 2nd Edition}, volume~1.
\newblock The MIT Press, 1 edition, 2001.
\newblock URL \url{https://EconPapers.repec.org/RePEc:mtp:titles:0262194406}.

\bibitem[Tawn(1988)]{Tawn1988}
J.A. Tawn.
\newblock Bivariate extreme value theory: Models and estimation.
\newblock \emph{Biometrika}, 75\penalty0 (3):\penalty0 397--415, 1988.
\newblock \doi{10.2307/2336591}.

\bibitem[Tawn(1990)]{TAWN1990}
J.A. Tawn.
\newblock Modelling multivariate extreme value distributions.
\newblock \emph{Biometrika}, 77\penalty0 (2):\penalty0 245--253, 1990.
\newblock \doi{10.2307/2336802}.

\bibitem[Varian(2016)]{Economics_causal_inference}
H.~R. Varian.
\newblock Causal inference in economics and marketing.
\newblock \emph{Proceedings of the National Academy of Sciences}, 113\penalty0
  (27):\penalty0 7310--7315, 2016.
\newblock \doi{10.1073/pnas.1510479113}.

\bibitem[Verma and Pearl(2013)]{Pearl}
T.~S. Verma and J.~Pearl.
\newblock On the equivalence of causal models.
\newblock \emph{CoRR}, 2013.
\newblock URL \url{http://arxiv.org/abs/1304.1108}.

\bibitem[Wood et~al.(2016)Wood, Pya, and Säfken]{Wood2}
S.~N. Wood, N.~Pya, and B.~Säfken.
\newblock Smoothing parameter and model selection for general smooth models.
\newblock \emph{Journal of the American Statistical Association}, 111\penalty0
  (516):\penalty0 1548--1563, 2016.
\newblock \doi{10.1080/01621459.2016.1180986}.

\bibitem[Wood(2017)]{Wood}
S.N. Wood.
\newblock \emph{Generalized Additive Models: An Introduction with R}.
\newblock Chapman and Hall/CRC, 2 edition, 2017.

\bibitem[Yoshida and Naito(2012)]{Yoshida}
T.~Yoshida and K.~Naito.
\newblock Asymptotics for penalized splines in generalized additive models.
\newblock \emph{Journal of Nonparametric Statistics}, 26, 08 2012.
\newblock \doi{10.1080/10485252.2014.899360}.

\bibitem[Youngman(2020)]{evgam}
B.~Youngman.
\newblock Generalised additive extreme value models.
\newblock 2020.
\newblock URL \url{https://CRAN.R-project.org/package=evgam}.

\end{thebibliography}

\end{document}